\newtheorem{theorem}{Theorem}[section]
\newtheorem{proposition}{Proposition}[section]
\newtheorem{claim}[theorem]{Claim}
\newtheorem{remark}[theorem]{Remark}
\title{Optimal Execution with identity optionality}
\date{\today}
\author{René Carmona*, Claire Zeng*}
\address{*Department of Operations Research and Financial Engineering \\
Princeton University \\
Princeton, NJ 08544}
\email{\texttt{\{}rcarmona, cszeng\texttt{\}} [at] princeton.edu}
\begin{document}
\begin{abstract}
This paper investigates the impact of anonymous trading on the agents' strategy in an optimal execution framework. It mainly explores the specificity of order attribution on the Toronto Stock Exchange, where brokers can choose to either trade with their own identity or under a generic anonymous code that is common to all the brokers. We formulate a stochastic differential game for the optimal execution problem of a population of $N$ brokers and incorporate permanent and temporary price impacts for both the identity-revealed and anonymous trading processes. We then formulate the limiting mean-field game of controls with common noise and obtain a solution in closed-form via the probablistic approach for the Almgren-Chris price impact framework. Finally, we perform a sensitivity analysis to explore the impact of the model parameters on the optimal strategy. 
\end{abstract}
 
\maketitle

\section{Introduction, motivations and literature review}

\indent With the advent of algorithmic trading and electronic markets, automated and high frequency trading have become an increasingly active area of research. As a result, a considerable amount of effort has been devoted to understanding market microstructure. The existing literature covers a wide variety of subjects, but three main categories can be broadly identified: 
\begin{itemize}
    \item Statistical arbitrage, or studying opportunities to make profits out of ``predictable'' returns (short-term alpha) or benefit from short-term inefficiencies in the market (as with pair-trading or futures-index arbitrage);
    \item Optimal execution, or determining the optimal schedule (for a given cost functional) in order to sell and/or acquire a large position in one or multiple assets while mitigating several risks (such as information leakage, price impact and adverse selection);
    \item Market making, or determining the optimal placement of limit orders to benefit by providing liquidity to markets. 
\end{itemize}

The study of market impact and the modelling of market frictions has been of prime interest in designing efficient trading strategies. This paper analyzes a multi-agent optimal execution problem, where many financial institutions and brokers must determine the strategy to liquidate or build a position on a specific asset while maximizing an expected profit objective function. 

Most of the early literature on optimal execution focuses on the single agent setting, where the trader is facing a trade-off between choosing a fast trading rate (to reach their goal as soon as possible to reduce the execution risk) and limiting their price impact (which pushes prices in an unfavourable direction on average). The initial framework for the single trader case is attributed to Almgren and Chriss, who consider both a permanent and an immediate price impact in \cite{Almgren_Chriss_1998}. \cite{Obizhaeva_Wang_2005} considers the same problem but with a transient price impact that has exponential decay, while \cite{Gatheral_2009} generalizes this approach by introducing a general decay kernel. We refer to \cite{Cartea_Jaimungal_Penalva_2015} for a detailed presentation on optimal execution with a more general objective function. 
 \cite{Predatory_Brunnermeier} and \cite{Carlin_Lobo_Viswanathan_2007} introduced the two-player setting, where one agent has a liquidation target and the other is trying to benefit from \textit{predatory trading} by exploiting this information. Considering many agents is essential to model the markets and is of particular interest to try to model some financial events. Indeed, there are some liquidity events that may force some traders to liquidate a large position of an asset within a relatively short time window. Changes in the membership in stock indexes such as the Russell 3000, is a case in point. ETFs are funds that track indexes; they try to minimize the tracking error by replicating the index of interest in their portfolio. As a result, any major change in the index composition causes the ETF to rebalance its portfolio, adding or dropping the same stocks as the index. For instance, the Russell US indexes undergo an annual reconstitution process and the benchmark composition is communicated in advance to the marketplace. This composition change is essential to make sure that the indexes reflect accurately the US equity market. At the end of May, the official modifications are announced, and will be effective at the end of June. June is therefore a transition month, during which ETFs and other institutions tracking the index must trade to rebalance their portfolio, so that it replicates the reconstitution portfolio by the end of June.

The first extensions to multi-agent settings modeled one large trader facing an exogenous order flow such as in \cite{Cartea_Jaimungal_2015}. Modeling the interaction in an endogenous way, that is, when the order flow and the price dynamics come from the interaction of the agents on the market, has been formulated in finite-player stochastic differential games and in mean field games. Developed initially in \cite{Lasry_Lions_2006a}, \cite{Lasry_Lions_2006b}, \cite{Lasry_Lions_2007} and in parallel in \cite{Huang_Malhame_Caines_2006}, \cite{Caines_Huang_Malhamé_2018}, mean field games have been applied to several problems in economics and finance: for instance, \cite{cardaliaguet2017mean} addresses the problem of crowd trading, where the traders interact through the asset mid-price process. The use of mean field games requires the assumption of symmetry among the agents, but heterogeneous preferences can be considered by introducing either a Major-Minor framework as in \cite{Jaimungal-MFG-Major-Minor} or several sub-populations. Several approaches are possible when solving a mean field game problem. The probabilistic approach, formulated by \cite{CarmonaDelarue1}, aims at characterizing directly optimal controls. This method has been applied to the optimal execution problem in several works starting with\cite{Carmona_Lacker_2015}. The variational approach, which relies on a Dynamic Programming Principle, has been used in \cite{cardaliaguet2017mean} and \cite{Jaimungal-MFG-Major-Minor}: it characterizes value functions directly, while incidentally characterizing optimal controls. Another approach based on applying convex analysis techniques can also be used as in \cite{Firoozi_Jaimungal_Caines_2020} and \cite{Neuman_Voss_2021}.  

With the introduction of pre-trade anonymity in equity markets, many exchanges have shifted towards a fully anonymous design; this has been further exacerbated by the increasing competition from Alternative Trading Systems (ATS) and Electronic Communication Networks (ECN). Publications addressing the impact of anonymity on market quality, in particular on liquidity, are few and far between. The consensus is that anonymity offers an additional opportunity for market participants to enhance their trading strategies for a better execution. Previous studies compared the effect of anonymity in different market designs. Some performed statistical analysis and comparison between separate platforms dedicated to anonymous and non-anonymous trading (\cite{Reiss_Werner_2005}) or before and after a regulatory change in identity disclosure requirements. The Toronto Stock Exchange (TSX) sets itself apart in that it has a hybrid system where anonymity and transparency co-exist side-by-side. The TSX public trading tape (Level I data) is also helpful in overcoming the obstacles mentioned above, both because it is one of the few markets where anonymity and identity are both actively used (at least for the most liquid stocks) and because it is quite significant in size; it represents the eleventh largest exchange world wide and is ranked third in North America in market capitalization. Voluntary identity disclosure is a key feature of the market design of the TSX \footnote{An additional feature of the Toronto stock exchange lies in the difference between disclosing one's identity or not. Indeed, anonymous orders are excluded from what is called \textit{broker preferencing}. This refers to the priority of the order matching on TSX: attributed orders will follow the Price/Broker/Long Life/Time priority to be matched (Long Life orders are committed to rest in the order book for a minimum period of time, during which they can be neither modified nor canceled). At the  best bid or offer price, attributed orders of one specific broker will be matched with new offsetting attributed orders from the same broker, and this, ahead of the other brokers that arrived before them. This  enables attributed orders to ``jump the queue'' and happens only if the identity is disclosed on both sides. Once these orders are matched, the Long Life/Time priority will be applied to the other orders. This broker preferencing allows for lower transaction costs, since internal crosses are free of fees. Therefore, not disclosing the identity on certain orders represents a potential cost, but this is not the object of our analysis.}. Although all agents are identified by a unique identity code (which is publicly available), every broker has the right to either disclose their identity or choose anonymity for each individual trade.  When anonymity is chosen, all the anonymous buyer/seller identities are reported under a generic anonymous code 01 both pre-trade and post-trade; otherwise, the exchange discloses the buyer and/or seller ID codes. Figure \ref{fig:inventory-example} below illustrates the inventory accumulated by the generic broker (representing the anonymous orders of all the brokers trading anonymously) and on the right the inventory of the specific broker with ID code \#65 for the stock RCI.B (Rogers Communications Inc. Class B) on 02/26/2022. The statistical analysis in  \cite{TSX_anonymous} tries to identify the determinants of the decision to trade anonymously. Based on a data set from the Market Regulation Services supervising the Canadian securities market, the authors find that reduced execution costs are associated with anonymous orders from strategic traders, while the trades displaying the identity of specialists and dual capacity brokers have a higher price impact. Moreover, they infer that anonymity is strategically selected, depending in particular on the market conditions but also on the order source, size, aggressiveness, and expected execution costs. 

\begin{figure}[ht]
\centering
\includegraphics[width=15cm, trim={1cm 2.5cm 0 1cm}]{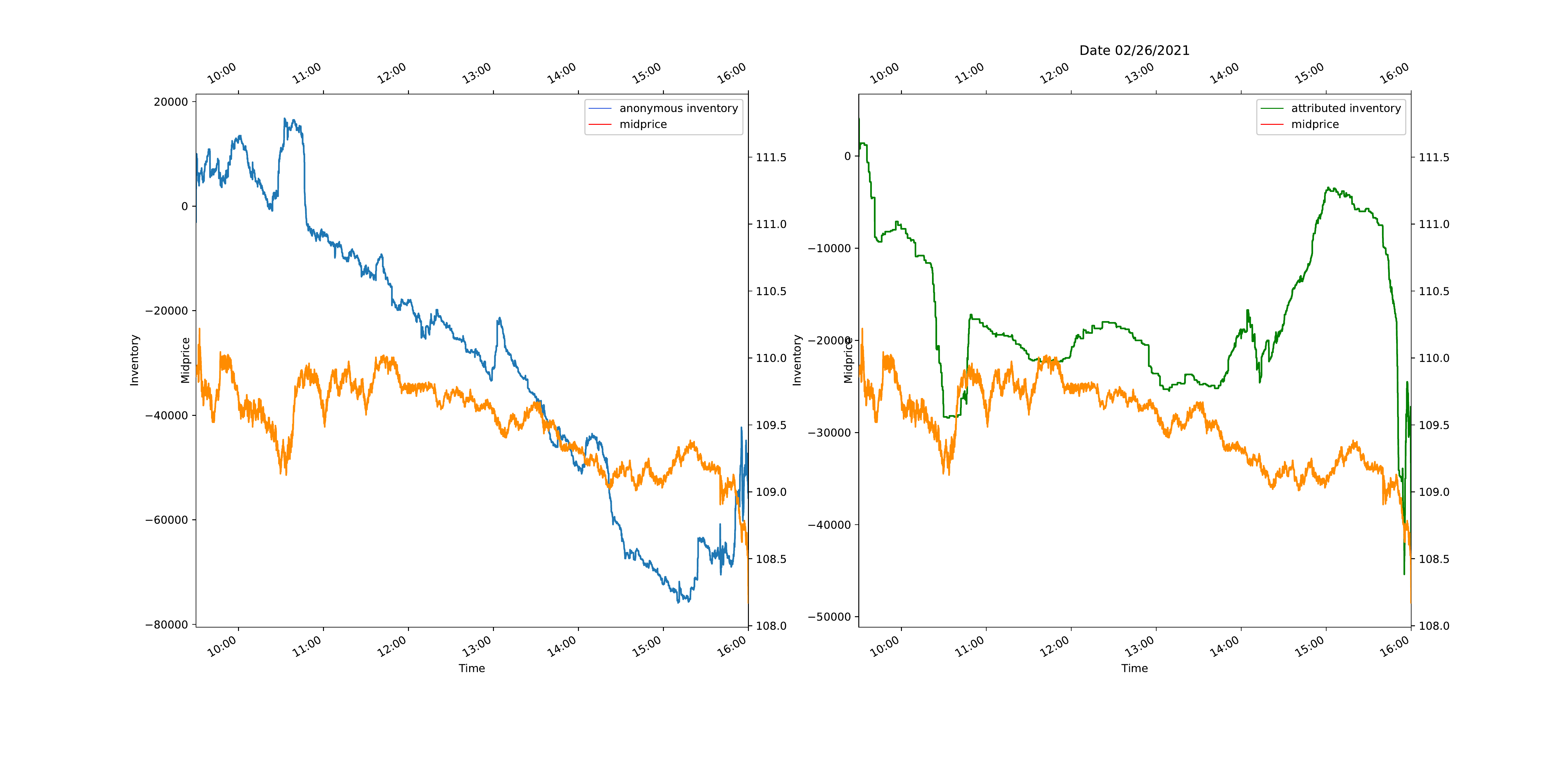}
\caption{Non-anonymous inventory for the anonymous broker \#1 (left) and broker \#65 (right) for the stock RCI.B on 02/26/2021}
\label{fig:inventory-example}
\end{figure}

This paper introduces an optimal execution stochastic differential game for a setting that takes into account identity optionality and whose limiting problem is a mean-field game of controls with common noise. Our analysis aims at determining if the voluntary disclosure of identity represents an opportunity for the market participants and investigates its impact on the agents' trading strategies, rather than on the market quality. Section \ref{section:finite} presents the general $N$-player stochastic differential game for the optimal execution problem with identity optionality. Section \ref{section:mfg} describes the general strategy to solve the mean field game limit. Section \ref{section:section_specific} specifies the model for an \textit{Almgren-Chriss}-type price impact model and develops the McKean-Vlasov Theory necessary to find an open-loop solution of the limiting mean field game formulated in the strong formulation and presents a numerical study. 
Finally, Section \ref{section:numerics} provides numerical illustrations and interpretations of the equilibrium characteristics for some realistic sets of parameters.
  
\vskip 6pt\noindent
\textbf{Notations:}
\vskip 2pt
We assume that we are given a complete probability space $(\Omega^0, \mathcal{F}^0, \mathbb{P}^0)$ endowed with a complete and right-continuous filtration $\mathbb{F}^0 = (\mathcal{F}^0_t)_{t \in [0,T]}$ generated by the Wiener process $\boldsymbol{W}^0 = (W^0_t)_{t \in [0,T]}$ and for each integer $i\ge 1$, a complete probability space $(\Omega^i, \mathcal{F}^i, \mathbb{P}^i)$ endowed with a complete and right-continuous filtration $\mathbb{F}^i = (\mathcal{F}^i_t)_{t \in [0,T]}$ generated by a two-dimensional Wiener process $\boldsymbol{W}^i = (\boldsymbol{W}^i_t)_{t \in [0,T]} = ((W^{i,1}_t,W^{i,2}_t)^{\dagger})_{t \in [0,T]}$.  The independent Wiener processes $\boldsymbol{W}^{i,1}$ and $\boldsymbol{W}^{i,2}$ play the roles of the idiosyncratic noises of player $i$ associated with the anonymous and non-anonymous trading processes respectively. We assume independence of the Brownian motions $\boldsymbol{W}^0, \boldsymbol{W}^1, \dots, \boldsymbol{W}^N, \dots$.

\section{Finite Player stochastic differential game}
\label{section:finite}
If $N\ge 1$ is an integer, when considering a model for $N$ players, we denote by $(\Omega^0\times\Omega^1\times\dots\Omega^N, \mathcal{F}^0 \otimes  \mathcal{F}^1  \otimes  \dots \otimes  \mathcal{F}^N, \mathbb{P}^0 \otimes \mathbb{P}^1 \otimes \dots \otimes \mathbb{P}^N)$ endowed with a complete and right-continuous filtration $\mathbb{F} = (\mathcal{F})_{t \in [0,T]}$ defined from the augmentation of the product filtration $\mathbb{F}^0 \otimes  \mathbb{F}^1  \otimes  \dots \otimes  \mathbb{F}^N$ so that it is right-continuous and complete. So in such a model, the common noise $W^0$ is essentially constructed on $(\Omega^0, \mathcal{F}^0, \mathbb{P}^0)$ whiile the idiosyncratic noises $(\boldsymbol{W}^i)_{i \geq 1}$ are essentially built on their respective $(\Omega^i, \mathcal{F}^i, \mathbb{P}^i)$.  

\subsection{Price Impact Model}
For simplicity we assume that all players trade the same stock whose price at time $t$ is denoted by $S_t$. The inventory $Q^i_t$ at time $t$ of broker $i$ is the aggregation of two separate inventories: an inventory $Q^{i,a}_t$ accumulated by trading anonymously at speed $\nu^{i,a}_t$ and an inventory $Q^{i,n}_t$ accumulated through identity-revealed trading at speed $\nu^{i,n}_t$. We ssume that these inventories have non-trivial quadratic variations and we denote by $\sigma_a$ and $\sigma_n > 0$ their volatilities, so: 
\begin{subequations} \label{eq:N_payer_inventories_dynamics} 
   \begin{alignat}{2}
      & d Q^{i,a}_t = \nu^{i,a}_t dt + \sigma_a dW^{i,1}_t \\
    & d Q^{i,n}_t = \nu^{i,n}_t dt + \sigma_n dW^{i,2}_t.
    \end{alignat} 
\end{subequations}
 
\begin{remark} Note the presence of  idiosyncratic Brownian motions in the dynamics of the inventory processes, and as we shall see later on, in the associated wealth processes derived from the self-financing conditions. The noise term $\sigma_a dW^{i,1}_t$ (resp. $\sigma_a dW^{i,1}_t$) models a random stream of client demands the broker faces. These demands  affect the broker anonymous (resp. identity revealed) inventory. This was first introduced and studied in \cite{carmonawebster2019applications}, and later empirically supported in \cite{carmonaleal2021optimal} where statistical tests performed on the Toronto Stock Exchange data show that both the inventory and the wealth dynamics should have non-zero quadratic variations. This assumption is also consistent with the option for a client to specify if they want anonymity or the broker identity to appear when they send their order. 
\end{remark}

Here, we use the modeling assumptions introduced in \cite{Carmona_Lacker_2015} and developed in \cite{carmonawebster2019applications} with a nonlinear order book. At each time $t$, every agent faces a cost structure given by two transaction cost curves $c_a , c_n:\mathbb{R} \mapsto [0,\infty]$, which are convex and satisfy $c_a(0) = c_n(0) = 0$. The order book incorporates each trade and reconstructs itself instantly around a new mid-price $S_t$, impacted in the following way by the transactions. If a single agent $i$ places an anonymous (resp. identity-revealed) market order of $\nu^{i,a}_t$ (resp. $\nu^{i,n}_t$) when the mid-price is $S_t$, the transaction will cost them $\nu^{i,a}_t S_t + c_a(\nu^{i,a}_t)$ (resp. $\nu^{i,n}_t S_t + c_n(\nu^{i,n}_t)$). Hence an anonymous and an identity-revealed trades will trigger changes in cash given by: 
\begin{subequations} \label{eq:N_payer_cash_processes_dynamics} 
   \begin{alignat}{2}
    & d K^{i,a}_t = - (\nu^{i,a}_t S_t + c_a(\nu^{i,a}_t))dt \\
    & d K^{i,n}_t = - (\nu^{i,n}_t S_t + c_n(\nu^{i,n}_t))dt
   \end{alignat} 
\end{subequations}
respectively .
Let us make the assumption that $c_a(\cdot) = \kappa_a c(\cdot)$ and $c_n = \kappa_n c(\cdot)$ for some positive constants $\kappa_a$ and $\kappa_n$, and a function $c : \mathbb{R} \mapsto \mathbb{R}$ which is convex and satisfies $c(0) = 0$. 

\vskip 1pt
Once a market orders are executed, the order book relocates around a price incorporating the \textit{permanent price impact}  composed of two terms :
\begin{itemize}\itemsep=-2pt
    \item the impact from the anonymous market orders : $\frac{\gamma_a}{N} \sum_{j=1}^N \kappa_a c'(\nu^{j,a}_t)$ 
    \item the impact from the identity-revealed market orders : $\frac{\gamma_n}{N} \sum_{j=1}^N \kappa_n c'(\nu^{j,n}_t) $ 
\end{itemize}
Hence the mid-price process can be modeled as a martingale plus a drift representing this permanent impact. 
\begin{equation} 
\label{eq:N_player_price_dynamics}
 dS_t = \Big( \frac{\gamma_a}{N} \sum_{j=1}^N \kappa_a c'(\nu^{j,a}_t) + \frac{\gamma_n}{N} \sum_{j=1}^N\kappa_n c'(\nu^{j,n}_t)  \Big) dt + \sigma_0 dW^0_t
\end{equation}
The wealth $V^{i,a}_t$ (resp.  $V^{i,n}_t$) accumulated by agent $i$ through anonymous (resp. attributed) trading at time $t$ will therefore be the sum of the initial value, the mark-to-market value of the anonymous (resp. attributed) inventory and the anonymous (resp. attributed) cash process:
\begin{align*}
   & V^{i,a}_t = V^{i,a}_0 + Q^{i,a}_t S_t + K^{i,a}_t \\
    & V^{i,n}_t = V^{i,n}_0 + Q^{i,n}_t S_t + K^{i,n}_t, 
\end{align*}
and since the Brownian motions $W^0_t$ is independent of $W^{i,1}_t$ and $W^{i,2}_t$, the wealth processes have the following dynamics:  
\begin{subequations} 
   \begin{alignat}{2}
    & d V^{i,a}_t = Q^{i,a}_t dS_t + S_t dQ^{i,a}_t + dK^{i,a}_t\\
    & d V^{i,n}_t = Q^{i,n}_t dS_t + S_t dQ^{i,n}_t + dK^{i,n}_t
   \end{alignat} 
\end{subequations}
We assume that the agents are \text{risk-neutral} and seek to maximize the expectation of their terminal wealth, a running cost $f$ and a terminal cost $g$ functions of their inventories. Agent $i$ therefore wants to maximize the following objective functional:
\begin{equation} 
\label{eq:N_player_objective_general}
J^{i}(\boldsymbol{\nu}^i, \boldsymbol{\nu}^{-i}) = \mathbb{E} \Big[ V^{i,a}_T + V^{i,n}_T + g(Q^{i,a}_T, Q^{i,n}_T) + \int_0^T f(Q^{i,a}_t, Q^{i,n}_t) dt \Big]  
\end{equation}
By using the expression of the wealth processes and discarding the constant terms, the objective functional to maximizes becomes: 
\begin{equation} \label{eq:N_player_objective_general_IBP}
J^{i}(\boldsymbol{\nu}^i, \boldsymbol{\nu}^{-i}) = \mathbb{E} \Big[ g(Q^{i,a}_T, Q^{i,n}_T) + (Q^{i,a}_T + Q^{i,n}_T)S_T + \int_0^T \Big( d(K^{i,a}_t + K^{i,n}_t) + f(Q^{i,a}_t, Q^{i,n}_t) dt \Big) \Big]  
\end{equation}
By using the expression of $\boldsymbol{K}_t$, the objective functional becomes : 
\begin{equation}
J^{i}(\boldsymbol{\nu}^i, \boldsymbol{\nu}^{-i}) = \mathbb{E} \Big[ g(Q^{i,a}_T, Q^{i,n}_T) + (Q^{i,a}_T + Q^{i,n}_T)S_T - \int_0^T \Big((\nu^{i,a}_t + \nu^{i,n}_t)S_t + c(\nu^{i,a}_t) + c(\nu^{i,n}_t) - f(Q^{i,a}_t, Q^{i,n}_t) \Big) dt \Big]  
\end{equation}
and we are using the various rates of trading as controls which we assume to be progressively measurable for the filtrations generated by the idiosyncratic noise and the common noise. 

\subsection{Hamiltonian}

At each given time, the state of the system is captured by the state variable $(s, \underline{\boldsymbol{q}} )$ where $s$ is the value of the mid-price, and  $\underline{\boldsymbol{q}}$ is an $N$-tuple $\underline{\boldsymbol{q}} = (\boldsymbol{q}^1, \dots, \boldsymbol{q}^N) \in (\mathbb{R}^{2})^N$ where $\boldsymbol{q}^{i} = (q^{i,a}, q^{i,n}) \in \mathbb{R}^2$ describes the private state of agent $i$. The first dual variable is $y^s$, the next one being $\underline{\boldsymbol{y}} = (\underline{\boldsymbol{y}}^{1}, \dots, \underline{\boldsymbol{y}}^{N}) \in \mathbb{R}^{2N}$ where each $\underline{\boldsymbol{y}}^{i}$ is itself an $N$-tuple $\underline{\boldsymbol{y}}^i = (\boldsymbol{y}^{i,1}, \dots, \boldsymbol{y}^{i,N}) \in \mathbb{R}^{2N}$, where  $\boldsymbol{y}^{i} = (y^{i,a}, y^{i,n})$. The last dual variables are $\boldsymbol{z}^s$ and $\underline{\boldsymbol{z}} = (\boldsymbol{z}^{1}, \dots, \boldsymbol{z}^{N}) \in (\mathbb{R}^2)^N$ where  $\boldsymbol{z}^{i} = (\boldsymbol{z^{i,s}}, \boldsymbol{z^{i,a}}, \boldsymbol{z^{i,n}})$ with $\boldsymbol{z^{i,l}} = (z^{i,l,0}, z^{i,l,1}, z^{i,l,2})$ for $l \in \{s,a,n\}$. 

\noindent For each $i \in \{1, \dots, N\}$, the Hamiltonian of player $i$ reads: 
\begin{align}
        H^i(s, \underline{\boldsymbol{q}}, y^s, \underline{\boldsymbol{y}}^i, z^s, \underline{\boldsymbol{z}}^i, \underline{\boldsymbol{\nu}}, \boldsymbol{\overline{\nu}}^{N}) = \, & \Big(\frac{\gamma_a  \kappa_a}{N}\sum_{j=1}^N c'(\nu^{j,a}_t) + \frac{\gamma_n \kappa_n}{N} \sum_{j=1}^N c'(\nu^{j,n}_t)\Big) y^s + \boldsymbol{\nu}^i \cdot \boldsymbol{y}^{i}  \nonumber \\ 
        & + \sigma z^s  +  \sigma^i_a z^{i,a,1} + \sigma^i_n z^{i,a,2} -  f(q^{i,a},q^{i,n})   \nonumber \\ 
        & + \kappa_a c(\nu^{i,a}) + \kappa_n c(\nu^{i,n}) + (\nu^{i,a} + \nu^{i,n} )s. 
\end{align} 

\subsection{Conditional Propagation of Chaos}
Let us denote the elements $\omega^0 \in \Omega^0$ and $\boldsymbol{\omega}^j = (\omega^{j,1}, \omega^{j,2}) \in \Omega^j$ such that $W^0_t(\omega) = \omega^0(t)$,  $W^{j,1}_t(\omega) = \omega^{j,1}(t)$  and $W^{j,2}_t(\omega) = \omega^{j,2}(t)$ for every $j \in \{1, \dots, N\}$. Generic elements of $\Omega$ are denoted $\omega = (\omega^0, \omega^{1,a}, \omega^{1,n}, \dots, \omega^{N,a},
\omega^{N,n}, \dots)$. 

\noindent We search for an equilibrium in \textbf{distributed open-loop} controls, i.e. the control of one individual depends only upon the history of their own idiosyncratic noises (the Brownian motions driving their own inventory processes) and the common noise (the Brownian motion driving the price process). We are therefore looking for open-loop policy functions $\boldsymbol{\pi} = (\pi^a, \pi^n)$ such that the actual controls used by trader $j$ at time $t \in [0,T]$ are of the form:
\begin{align}
    \nu^{N,j,a}_t(\omega) = \pi^a_t(\omega^0_{[0,t]}, \omega^{j,1}_{[0,t]},  \omega^{j,2}_{[0,t]}) \label{eq:policy_anon} \\
     \nu^{N,j,n}_t(\omega) = \pi^n_t(\omega^0_{[0,t]}, \omega^{j,1}_{[0,t]},  \omega^{j,2}_{[0,t]})  \label{eq:policy_id}
\end{align}
for deterministic progressively measurable (deterministic) functions 
$$
\pi^a, \pi^n : [0,T] \times \mathcal{C}([0,T], \mathbb{R}) \times \mathcal{C}([0,T], \mathbb{R}^2),
$$
where we use the notation $\mathcal{C}([0,T],\mathcal{X})$ for the space of continuous functions from $[0,T$ into $\mathcal{X}$. We then consider $(S^{N, \bar{\nu}}_t, Q^{N,i,a}_t, Q^{N,i,n}_t)$ as the private state of each trader $i$ by artificially duplicating the mid-price process $S^{N, \bar{\nu}}$. So the dynamics of the state of trader $i$ are given by : 
\begin{subequations}\label{eq:dynamics_N_state}
    \begin{alignat}{2}
    & dS^{N, \bar{\nu}}_t = \Big( \frac{\gamma_a \kappa_a}{N} \sum_{j=1}^N  c'(\nu^{N,j,a}_t)+ \frac{\gamma_n \kappa_n }{N} \sum_{j=1}^N c'(\nu^{N,j,n}_t) \Big) dt + \sigma_0 dW^0_t \label{eq:dynamics_N_midprice_state} \\
   & dQ^{N,i,a}_t =  \nu^{N,i,a}_t dt + \sigma_a dW^{i,1}_t \label{eq:dynamics_N_anoninventory_state} \\ 
   & dQ^{N,i,n}_t =  \nu^{N,i,n}_t dt + \sigma_n dW^{i,2}_t \label{eq:dynamics_N_idinventory_state}
   \end{alignat}
  \end{subequations}

\noindent Intuitively, in the limit $N \rightarrow \infty$, the equilibrium distribution of the controls should still feel the influence of the common noise $W^0$ and therefore, it should not be deterministic but rather $\mathbb{F}^0$-measurable. We conjecture that as $N$ tends to $\infty$, the empirical average of the $\{\nu^{N,i,a/n}_t)\}_{i \in [N]}$ converges towards the common conditional expectation of the respective cost of the controls given the common source of noise $W^0$. We therefore formulate the following problem. For each individual $i \in \{1, \dots, N\}$, we have : 
\begin{subequations}
    \begin{alignat}{2}
    & dS^{i, \bar{\nu}}_t = \Big( \gamma_a \kappa_a \mathbb{E}[c'(\nu^{j,a}_t)| \mathcal{F}^0_t] + \gamma_n \kappa_n \mathbb{E}[c'(\nu^{j,n}_t)| \mathcal{F}^0_t]\Big) dt + \sigma_0 dW^0_t, \quad S_0 = s_0  \label{eq:dynamics_limit_midprice_state} \\
   & dQ^{i,a}_t =  \nu^{i,a}_t dt + \sigma_a dW^{i,1}_t, \quad Q^{i,a}_0 = 0 \label{eq:dynamics_limit_anoninventory_state} \\ 
   & dQ^{i,n}_t =  \nu^{i,n}_t dt + \sigma_n dW^{i,2}_t , \quad Q^{i,n}_0 = 0 \label{eq:dynamics_limit_idinventory_state}
   \end{alignat}
  \end{subequations}
for $t \in [0,T]$ and where for $j \in \{1, \dots, N\}$ : 
\begin{align}
    \nu^{j,a}_t(\omega) = \pi^a_t(\omega^0_{[0,t]}, \omega^{j,1}_{[0,t]},  \omega^{j,2}_{[0,t]})  \\
     \nu^{j,a}_t(\omega) = \pi^n_t(\omega^0_{[0,t]}, \omega^{j,1}_{[0,t]},  \omega^{j,2}_{[0,t]}) 
\end{align}
\begin{claim}\label{claim_identical}
There exist progressively measurable functions $\phi^a, \phi^n : [0,+\infty) \times \Omega^0 \mapsto \mathbb{R}$ such that for all $\omega^0 \in \Omega^0, \, j \in \{1, \dots, N\}$: 
\begin{subequations}\label{eq:policy_identical}
    \begin{alignat}{2}
    & \phi^a(t, \omega^0) = \mathbb{E}[c'(\nu^{j,a}_t) | \mathcal{F}^0_t ](\omega^0)   \\
    & \phi^n(t, \omega^0) = \mathbb{E}[c'(\nu^{j,n}_t) | \mathcal{F}^0_t ](\omega^0)
   \end{alignat}
  \end{subequations}
\end{claim}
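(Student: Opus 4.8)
The plan is to exploit the exchangeability built into the model and the conditional independence structure induced by the common noise. The key observation is that for fixed policy functions $\boldsymbol{\pi} = (\pi^a, \pi^n)$, the controls $\nu^{j,a}_t$ and $\nu^{j,n}_t$ of trader $j$ are, by construction in \eqref{eq:policy_anon}--\eqref{eq:policy_id}, measurable functions of the pair $(\boldsymbol{W}^0_{[0,t]}, \boldsymbol{W}^j_{[0,t]})$ only, where $\boldsymbol{W}^j = (W^{j,1}, W^{j,2})$. Since the idiosyncratic noises $(\boldsymbol{W}^j)_{j \ge 1}$ are i.i.d.\ and independent of $\boldsymbol{W}^0$, conditionally on $\mathcal{F}^0_t$ (equivalently, conditionally on the path $\omega^0_{[0,t]}$) the random variables $c'(\nu^{j,a}_t)$ for $j = 1, \dots, N$ are i.i.d.: each has the same conditional law, namely the image of the law of $\boldsymbol{W}^j_{[0,t]}$ under the deterministic map $\omega^j \mapsto c'(\pi^a_t(\omega^0_{[0,t]}, \omega^{j,1}_{[0,t]}, \omega^{j,2}_{[0,t]}))$ with $\omega^0$ held fixed. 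In particular the conditional expectation $\mathbb{E}[c'(\nu^{j,a}_t) \mid \mathcal{F}^0_t]$ does not depend on $j$.

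First I would make precise the notion of conditional law: by a regular conditional probability / disintegration argument (available since all spaces are Polish and filtrations are complete and right-continuous), there is a version of $\mathbb{E}[c'(\nu^{j,a}_t) \mid \mathcal{F}^0_t]$ given by a measurable kernel evaluated at $\omega^0$. Concretely, define
\[
\phi^a(t, \omega^0) := \int_{\mathcal{C}([0,t],\mathbb{R}^2)} c'\bigl(\pi^a_t(\omega^0_{[0,t]}, \eta^1, \eta^2)\bigr)\, \mathbb{P}^{\boldsymbol{W}_{[0,t]}}(d\eta),
\]
where $\mathbb{P}^{\boldsymbol{W}_{[0,t]}}$ is the law on $\mathcal{C}([0,t],\mathbb{R}^2)$ of a two-dimensional Brownian path on $[0,t]$, and similarly for $\phi^n$ with $\pi^n$. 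One checks that $(t,\omega^0) \mapsto \phi^a(t,\omega^0)$ is progressively measurable: this follows from joint measurability of $\pi^a$ in its arguments, the fact that the map $\omega^0 \mapsto \omega^0_{[0,t]}$ is $\mathcal{F}^0_t$-measurable, and Fubini/measurability-of-integrals theorems (some care is needed if $c'$ is only defined a.e., but under the convexity assumption on $c$ one can work with, say, the right derivative, which is Borel). Then, using the independence of $\boldsymbol{W}^j$ from $\mathcal{F}^0$ and the standard ``freezing lemma'' for conditional expectations (if $X$ is $\mathcal{G}$-measurable and $Y$ is independent of $\mathcal{G}$, then $\mathbb{E}[h(X,Y)\mid\mathcal{G}] = (\mathbb{E}[h(x,Y)])|_{x=X}$), one identifies $\mathbb{E}[c'(\nu^{j,a}_t)\mid\mathcal{F}^0_t](\omega^0) = \phi^a(t,\omega^0)$ for $\mathbb{P}^0$-a.e.\ $\omega^0$, and this expression is manifestly independent of $j$. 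The same argument applied to $\pi^n$ gives $\phi^n$.

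The main technical obstacle, and the step deserving the most care, is the measurability/integrability bookkeeping: (i) ensuring $c'(\nu^{j,a}_t)$ is integrable so that the conditional expectation is well-defined — this requires an a priori bound on the controls (e.g.\ admissibility should include $\mathbb{E}\int_0^T (|\nu^{i,a}_t|^2 + |\nu^{i,n}_t|^2)\,dt < \infty$ together with a growth condition on $c'$, such as the linear growth $|c'(x)| \le C(1+|x|)$ that holds in the Almgren--Chriss quadratic case treated in Section \ref{section:section_specific}); and (ii) the joint measurability of the policy map $\pi^a$ needed to invoke Fubini and to conclude progressive measurability of $\phi^a$. Once these are in hand, the claim is essentially a bookkeeping consequence of exchangeability plus conditional independence, and no further structure of the game is needed. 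I would also remark that the statement is really the ``finite-$N$ prelude'' to the conditional propagation of chaos: it justifies replacing the empirical average $\frac1N\sum_j c'(\nu^{N,j,a}_t)$ by the common conditional expectation $\phi^a(t,\cdot)$ in the mean-field limit, which is the content anticipated in \eqref{eq:dynamics_limit_midprice_state}.
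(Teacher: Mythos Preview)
Your proposal is correct and follows essentially the same approach as the paper: both arguments rest on the fact that each $\nu^{j,a}_t$ is the \emph{same} deterministic policy $\pi^a$ evaluated at $(\omega^0_{[0,t]},\boldsymbol{W}^j_{[0,t]})$, with the $\boldsymbol{W}^j$ i.i.d.\ and independent of $W^0$, so the conditional expectations given $\mathcal{F}^0_t$ cannot depend on $j$. The paper's proof is a two-line sketch of this symmetry observation, whereas you make the argument more concrete by writing $\phi^a$ explicitly as an integral against the idiosyncratic Wiener law and invoking the freezing lemma, and you also flag the integrability and progressive-measurability bookkeeping that the paper leaves implicit; but the underlying idea is the same.
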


\begin{proof}
That's true, it is what was used for the first part but in a roundabout way. I changed the formulation to make it simpler: \\
By the assumption on the form of the controls \eqref{eq:policy_anon}-\eqref{eq:policy_id} and the definition of the conditional expectation, for every $j \in \{1, \dots, N\}$, there exist functions $\phi^{j,a}, \phi^{j,n} : [0,+\infty) \times \Omega^0 \mapsto \mathbb{R}$ such that \eqref{eq:policy_identical} holds. Since the controls $\nu^{j,a}$ (resp. $\nu^{j,n}$) involve the same open-loop policy $\pi^a$ (resp. $\pi^n$) and $(\boldsymbol{W}^i)_{i \geq 1}$ forms a sequence of independent Brownian motions, we conclude that the functions $\phi^{j,a}$ and $\phi^{j,n}$ are in fact the same across all $j \in \{1, \dots, N\}$. 
\end{proof}

From claim \ref{claim_identical}, we conclude that the processes $S^j_t$ are also the same across all $j \in \{1, \dots, N\}$. We can therefore write the dynamics of the state of trader $i \in \{1, \dots, N\}$ as : 
\begin{subequations}\label{eq:dynamics_limit_state}
    \begin{alignat}{2}
    & dS^{\bar{\nu}}_t = \Big( \gamma_a \kappa_a \mathbb{E}[c'(\nu^{a}_t)| \mathcal{F}^0] + \gamma_n \kappa_n \mathbb{E}[c'(\nu^{n}_t)| \mathcal{F}^0]\Big) dt + \sigma_0 dW^0_t, \quad S_0 = s_0  \label{eq:dynamics_limit_midprice_state_cv} \\
   & dQ^{i,a}_t =  \nu^{i,a}_t dt + \sigma_a dW^{i,1}_t, \quad Q^{i,a}_0 = 0 \label{eq:dynamics_limit_anoninventory_state_cv} \\ 
   & dQ^{i,n}_t =  \nu^{i,n}_t dt + \sigma_n dW^{i,2}_t , \quad Q^{i,n}_0 = 0 \label{eq:dynamics_limit_idinventory_state_cv}
   \end{alignat}
  \end{subequations}
for $t \in [0,T]$, where without loss of generality $\nu^a_t = \nu^{1,a}_t$ and $\nu^n_t = \nu^{1,a}_t$ for every $t \in [0,T]$. 

\begin{proposition}
Given the forms \eqref{eq:policy_anon} and \eqref{eq:policy_id}, it holds $\mathbb{P}$-almost surely: 
\begin{equation}
    \frac{1}{N} \sum_{j=1}^N c'(\nu^{j,a}_t) \underset{N \rightarrow \infty}{\longrightarrow} \mathbb{E}[c'(\nu^a_t) | \mathcal{F}^0_t],\, \quad \frac{1}{N} \sum_{j=1}^N c'(\nu^{j,n}_t) \underset{N \rightarrow \infty}{\longrightarrow} \mathbb{E}[c'(\nu^n_t)| \mathcal{F}^0_t]
\end{equation}
\end{proposition}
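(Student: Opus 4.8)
The statement is a conditional (quenched) strong law of large numbers: conditionally on the common noise $W^0$, the terms $c'(\nu^{j,a}_t)$, $j=1,2,\dots$, are independent and identically distributed, and their empirical mean converges to their common conditional expectation. The plan is to make this conditional i.i.d.\ structure precise, apply the strong law of large numbers under a regular conditional probability given $\mathcal F^0$, and then identify the limit with the quantity appearing in \eqref{eq:dynamics_limit_state} by invoking Claim \ref{claim_identical}.

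In detail, fix $t\in[0,T]$ and set $X_j := c'(\nu^{j,a}_t)$ and $Y_j := c'(\nu^{j,n}_t)$. By the prescribed form of the controls \eqref{eq:policy_anon}--\eqref{eq:policy_id}, $X_j = (c'\circ\pi^a_t)(\omega^0_{[0,t]},\boldsymbol{\omega}^j_{[0,t]})$ and $Y_j = (c'\circ\pi^n_t)(\omega^0_{[0,t]},\boldsymbol{\omega}^j_{[0,t]})$, i.e.\ each $X_j$ (resp.\ $Y_j$) is a fixed measurable functional of $W^0$ and of the idiosyncratic noise $\boldsymbol{W}^j$ alone. First I would introduce a regular conditional probability $\mathbb{P}(\,\cdot\mid\mathcal{F}^0)$; since the Brownian motions $\boldsymbol{W}^1,\boldsymbol{W}^2,\dots$ are i.i.d.\ and independent of $\mathcal{F}^0$, they remain i.i.d.\ under $\mathbb{P}(\,\cdot\mid\mathcal{F}^0)$ for $\mathbb{P}^0$-a.e.\ $\omega^0$, and therefore so are the sequences $(X_j)_{j\ge1}$ and $(Y_j)_{j\ge1}$, with conditional law equal to the $\mathcal{F}^0$-conditional law of $X_1$ (resp.\ $Y_1$). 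Next I would invoke the admissibility of the controls --- it suffices that $\nu^{j,a}_t,\nu^{j,n}_t$ be square-integrable and $c'$ of linear growth, as is the case in the Almgren--Chriss specialization of Section \ref{section:section_specific} --- to get $\mathbb{E}[\,|X_1|\mid\mathcal{F}^0]<\infty$ and $\mathbb{E}[\,|Y_1|\mid\mathcal{F}^0]<\infty$, $\mathbb{P}$-a.s. The strong law of large numbers applied under $\mathbb{P}(\,\cdot\mid\mathcal{F}^0)$ then yields, for $\mathbb{P}^0$-a.e.\ $\omega^0$,
\[
\frac1N\sum_{j=1}^N X_j \longrightarrow \mathbb{E}[X_1\mid\mathcal{F}^0](\omega^0), \qquad \frac1N\sum_{j=1}^N Y_j \longrightarrow \mathbb{E}[Y_1\mid\mathcal{F}^0](\omega^0),
\]
on a set of conditional full measure; a Fubini argument over the conditioning kernel then upgrades this to $\mathbb{P}$-a.s.\ convergence. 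Finally, since $X_1=c'(\nu^{1,a}_t)$ is $\sigma(W^0_{[0,t]},\boldsymbol{W}^1_{[0,t]})$-measurable and $\boldsymbol{W}^1$ is independent of $\mathcal{F}^0$, one has $\mathbb{E}[X_1\mid\mathcal{F}^0]=\mathbb{E}[c'(\nu^a_t)\mid\mathcal{F}^0_t]$, which is exactly $\phi^a(t,\cdot)$ from Claim \ref{claim_identical}; likewise for the $n$-component. This identifies the two limits with the right-hand sides in the statement.

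The main obstacle is not any single estimate but the careful handling of the conditioning: setting up the regular conditional probability, checking that independence of the idiosyncratic noises survives conditioning on $\mathcal{F}^0$, and performing the measure-theoretic passage from ``almost surely under $\mathbb{P}(\,\cdot\mid\mathcal{F}^0)$ for a.e.\ $\omega^0$'' to ``almost surely under $\mathbb{P}$''. A secondary point is that the argument requires pinning down an admissible class of controls with enough integrability for the conditional SLLN, which is one reason the clean statement sits most naturally in the quadratic-cost setting. Note also that the convergence is obtained for each fixed $t$; a version holding on a single $\mathbb{P}$-null set simultaneously for all $t\in[0,T]$ would additionally require path regularity of $t\mapsto\nu^{j,a}_t$ together with a separability argument, or a functional law of large numbers on $\mathcal{C}([0,T],\mathbb{R})$.
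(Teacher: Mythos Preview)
Your proposal is correct and follows essentially the same route as the paper: the paper's own proof simply fixes $t$, observes that the $c'(\nu^{j,a}_t)$ are conditionally i.i.d.\ given $\mathcal{F}^0$ because the idiosyncratic noises are i.i.d.\ and independent of $W^0$, and then invokes the conditional strong law of large numbers. Your write-up unpacks the same argument with more care (regular conditional probability, the Fubini step, the integrability hypothesis), which is if anything an improvement in rigor over the paper's two-line sketch.
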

\begin{proof}
Let $t\in [0,T]$. 
Since $(\boldsymbol{W}^i)_{i \geq 1}$ is a sequence of independent Brownian motions, also independent of $W^0$, the $\{c'(\nu^{j,a}_t)\}_{j \geq 1}$ are conditionally independent and identically distributed given $\mathcal{F}^0$. \\
By the conditional Strong Law of Large Numbers, 
$$ 
\frac{1}{N} \sum_{j=1}^N c'(\nu^{j,a}_t) \underset{N \rightarrow \infty}{\longrightarrow} \mathbb{E}[c'(\nu^a_t)| \mathcal{F}^0],\qquad \mathbb{P} - a.s.
$$ 
The same argument holds for $\nu^{j,n}_t$.
\end{proof}

\section{Mean Field Game Formulation}
\label{section:mfg}

We now articulate and solve the mean field game formulation of the $N$-player game introduced above.
Before we proceed, we emphasize the fact that we still use the probability space $(\Omega^0, \mathcal{F}^0, \mathbb{P}^0)$ supporting the common noise, and we treat the \emph{generic player} as if it was the first player, de facto using only the probability space $(\Omega^1, \mathcal{F}^1, \mathbb{P}^1)$. We shall also use the two right-continuous and complete filtrations $\mathbb{F}^0 = (\mathcal{F}^0_t)_{0 \leq t \leq T}$ and $\mathbb{F}^1 = (\mathcal{F}^1_t)_{0 \leq t \leq T}$. Then, we define the product structure $\Omega = \Omega^0 \times \Omega^1, \, \mathcal{F}, \, \mathbb{F} = (\mathcal{F}_t)_{0 \leq t \leq T}, \mathbb{P}$ where $(\mathcal{F}, \mathbb{P})$ is the completion of $(\mathcal{F}^0 \otimes \mathcal{F}^1, \mathbb{P}^0 \otimes \mathbb{P}^1 )$ and $\mathbb{F}$ is the complete and right-continuous augmentation of $(\mathcal{F}^0_t \otimes \mathcal{F}^1_t)_{0 \leq t \leq T}$. 

Intuitively, in the limit $N \rightarrow \infty$, the equilibrium distribution of the controls should still feel the influence of the common noise $W^0$ and therefore,  they should not be deterministic but rather $\mathbb{F}^0$-measurable. This suggests that as $N$ tends to $\infty$, the empirical measures of  $\{\nu^{N,i,a}_t)\}_{i \in [N]}$ and of  $\{\nu^{N,i,n}_t)\}_{i \in [N]}$ converge towards a stochastic flow, that should be in equilibrium the conditional distribution of the optimal trading rate given the common source of noise $W^0$.

\noindent The search for a solution of the Mean Field Game will follow the classical strategy consisting in two steps : 
\begin{enumerate}
    \item For any arbitrary continuous $\mathbb{F}^0$-measurable stochastic process $\boldsymbol{\theta} = (\theta^a_t, \theta^n_t)_{0 \leq t \leq T}$, solve the optimization problem : 
    \begin{equation} \label{eq:generic_player_cost}
    \underset{\boldsymbol{\nu} \in \mathbb{A}}{\sup}  \,  J(\boldsymbol{\nu}, \boldsymbol{\theta})
    \end{equation}
    where 
    \begin{equation*}
    J(\boldsymbol{\nu}, \boldsymbol{\theta}) = \mathbb{E} \Big[ g(Q^{a}_T, Q^{n}_T) + (Q^{a}_T + Q^{n}_T)S_T - \int_0^T \Big( (\nu^{a}_t + \nu^{n}_t)S_t + \kappa_a c(\nu^{a}_t) + \kappa_n c(\nu^{i,n}_t) + f(Q^{a}_t, Q^{n}_t) \Big) dt \Big]  
    \end{equation*} 
subject to the dynamic constraint: 
\begin{subequations}\label{eq:dynamics_MFG_generic_state}
\begin{empheq}[left={\empheqlbrace\,}]{align}
    & dS_t = \Big( \gamma_a \kappa_a \theta^a_t + \gamma_n\kappa_n  \theta^n_t \Big) dt + \sigma_0 dW^0_t \label{eq:dynamics_MFG_generic_midprice_state} \\
   &  dQ^{a}_t = \nu^{a}_t dt + \sigma_a dW^1_t \label{eq:dynamics_MFG_generic_anoninventory_state} \\ 
   &  dQ^{n}_t = \nu^{n}_t dt + \sigma_n dW^2_t \label{eq:dynamics_MFG_generic_idinventory_state}
  \end{empheq}
  \end{subequations}
over the time interval $t \in [0,T]$, with $Q^a_0 = 0$, $Q^n_0 = 0$ and over controls which are adapted to both $\boldsymbol{W} = (W^1, W^2)$ and $W^0$. 

\item Determine the $\mathbb{F}^0$-measurable stochastic process $\boldsymbol{\theta} = (\boldsymbol{\theta}_t)_{0\leq t \leq T} = (\theta^a_t, \theta^n_t)_{0\leq t \leq T}$ so that the conditional marginal expectation of one optimal control with associated path $(\boldsymbol{Q}_t)_{0\leq t \leq T}$ given $W^0$ is precisely $(\boldsymbol{\theta}_t)_{0\leq t \leq T}$ itself, i.e. ;
\begin{equation}
    \forall t \in [0,T], \theta^a_t = \mathbb{E}\Big[c'(\nu^a_t) | \sigma \{ W^0_s,\, 0 \leq s \leq T \}\Big], \, \theta^n_t = \mathbb{E}\Big[c'(\nu^n_t) | \sigma \{ W^0_s,\, 0 \leq s \leq T \}\Big].
\end{equation}
\end{enumerate}

\subsection{Discussion of the main challenges}

This price impact model is particularly interesting because the interaction occurs through the (conditional) distribution of the controls and has only been studied in the literature for the case of no-identity optionality and without common noise. The price impact model was first formulated in \cite{Carmona_Lacker_2015} with no common noise, since the controls are adapted only to the idiosyncratic noise. The state process therefore reduces to the inventory. Via an application of Girsanov's theorem, they show the existence of weak solutions. \cite{CarmonaDelarue1} presents a slightly different model and solves it in the strong formulation. More generally, mean field games of controls without common noise have been the subject of substantial research. Indeed, the convergence of the value functions and Nash equilibria for the finite player games to the mean field limit counterpart has been addressed in \cite{Possamaï_Tangpi_2021} for models where the interaction occurs through the joint distribution of states and controls and where there is no common noise. A rate of convergence is derived for the particular Almgren-Chriss price impact model in a non-Markovian setting. \cite{Djete_2021} formulates the notion of measure-valued solutions and shows their existence for mean field games of controls in the no-common-noise setting. 
Some difficulties arise when we do not constrain the trading rates' adeptness only to the idiosyncratic noises. The presence of this additional noise and the application of the Pontryagin's Maximum Principle lead to a conditional McKean-Vlasov FBSDE system.  To the best of our knowledge, no general existence results are available for mean field games of controls with common noise.


\section{Explicit Solution of a Particular Model} 
\label{section:section_specific}

Given the difficulties mentioned above and for the sake of numerical analysis, we solve the model in the strong formulation for a model \textit{à la Cartea-Jaimungal}, namely one with an objective function that relies on the following price impact formulas: 
\begin{align} 
\label{eq:param_AC}
    \begin{cases}
   &  \gamma_a = \frac{\alpha_a}{2 \kappa_a}, \quad c_a(x) = \kappa_a x^2\\
   &  \gamma_n =  \frac{\alpha_n}{2 \kappa_n}, \quad c_n(x) = \kappa_n x^2 \\
   & f(t,q^a,q^n) = - \phi \lvert q^a + q^n \rvert ^2 \\
   & g(q^a, q^n) = - \Psi \lvert q^a + q^n - q_T \rvert ^2 
    \end{cases}
\end{align}
with the following admissibility set :
\begin{equation} \label{eq:admissible_set} 
    \mathbb{A}^{a/n} := \{ \nu \text{ is } \mathcal{F}\text{-predictable}, \, \nu \in \mathbb{H}^{2,1}_T \} 
\end{equation}
where $\mathbb{H}^{2,k}_T = \{ \nu : \Omega \times [0,T] \rightarrow \mathbb{R}^k, \mathbb{E}[\int_0^T \lvert \nu_u\rvert^2 du] < \infty\} $ for $k\geq 1$. 

\begin{remark}
Note that this choice for the price impacts corresponds to the initial Almgren-Chriss model. Indeed, the anonymous and identity-revealed temporary price impacts are parameterized by $\kappa_a >0$ and $\kappa_n>0$ respectively, while the permanent price impacts on the mid-price are given by $\alpha_a>0$ and $\alpha_n>0$. In order to be able to identify the relative contributions of the anonymous and the identity-revealed tradings, we assume that $\alpha_a \neq \alpha_n$ and $\kappa_a \neq \kappa_n$. The form of $g$ corresponds to a penalty on the difference between the actual terminal inventory and a chosen target $q_T$; it is parametrized by $\Psi$. The running cost corresponds to a penalization (parametrized by $\phi >0$) for inventory holding during the trading period. 
\end{remark}

\subsection{Solution in the Strong Formulation}
The optimization problem of a generic representative player has the form: 
  \begin{align}
    \underset{\boldsymbol{\nu} \in \mathbb{A}}{\sup}  \,  \mathbb{E} \Big[ & (Q^a_T + Q^n_T) S_T - \Psi (Q^{a}_T + Q^{n}_T - q_T)^2 - \phi \int_0^T (Q^{a}_t+Q^{n}_t)^2 dt     
\nonumber \\
    & -  \int_0^T \kappa_a (\nu^{a}_t)^2 dt  -  \int_0^T    \kappa_n (\nu^{n}_t)^2 dt  -  \int_0^T  (\nu^{a}_t + \nu^n_t ) S_t dt \Big]
    \label{eq:mfg-objective-function-4} 
    \end{align} 
  subject to the dynamic constraint: 
  \begin{align}
 \begin{cases}
         \label{eq:mfg-dynamics-4}
    dS_t = \Big( \alpha_a \mathbb{E}^0\Big[  \hat{\nu}^a_t  \Big] + \alpha_n  \mathbb{E}^0\Big[  \hat{\nu}^n_t  \Big] \Big) dt + \sigma_0 dW^0_t \\ 
         dQ^{a}_t = \nu^{a}_t dt + \sigma_a dW^1_t \\
         dQ^{n}_t = \nu^{n}_t dt + \sigma_n dW^2_t \\
    \end{cases}
\end{align}
over the time interval $t \in [0,T]$, with $Q^a_0 = 0$, $Q^n_0 = 0$, the optimization being over controls which are adapted to both $\boldsymbol{W}$ and $W^0$. Here, $\mathbb{E}^0$ stands for the conditional expectation given the filtration of the common noise.

\begin{proposition} 
\label{mfg-sol-proposition}
The optimal controls for the representative agent in the mean-field formulation \eqref{eq:mfg-objective-function-4}-\eqref{eq:mfg-dynamics-4}
can be written in feedback form as: 
\begin{subequations}
\label{eq:feedbackform_meanfield_trading_speeds}
\begin{alignat}{2}
    \hat{\nu}^a_t = -\frac{1}{2 \kappa_a} \Bigl[ \varphi_t (Q^a_t + Q^n_t)  + (\overline{\varphi}_t - \varphi_t) \mathbb{E}^0[{Q}^a_t + {Q}^n_t] + \overline{\chi}_t \Bigr] \\ 
    \hat{\nu}^n_t = -\frac{1}{2 \kappa_n} \Bigl[ \varphi_t (Q^a_t + Q^n_t) +  (\overline{\varphi}_t - \varphi_t) \mathbb{E}^0[{Q}^a_t + {Q}^n_t] + \overline{\chi}_t \Bigr]
\end{alignat}
\end{subequations}
where
\begin{align}
\begin{cases}
&  \varphi_t = \frac{-D \Bigl(e^{(\gamma^+ - \gamma^-)(T-t)} -1 \Bigr) - 2 \Psi \Bigl(\gamma^+ e^{(\gamma^+ - \gamma^-)(T-t)} - \gamma^-\Bigr)  }{\Bigl(\gamma^- e^{(\gamma^+ - \gamma^-)(T-t)} - \gamma^+\Bigr) - 2\Psi B \Bigl(  e^{(\gamma^+ - \gamma^-)(T-t)} - 1\Bigr) }, \quad t \in [0,T]\\
    & \overline{\varphi}_t = \frac{-C \Bigl(e^{(\delta^+ - \delta^-)(T-t)} -1 \Bigr) - 2 \Psi \Bigl(\delta^+ e^{(\delta^+ - \delta^-)(T-t)} - \delta^-\Bigr)  }{\Bigl(\delta^- e^{(\delta^+ - \delta^-)(T-t)} - \delta^+\Bigr) - 2\Psi B \Bigl(  e^{(\delta^+ - \delta^-)(T-t)} - 1\Bigr) }, \quad t \in [0,T] \\
    & \overline{\chi}_t = - 2 \Psi q_T \exp \Bigl\{ - \int_t^T \Bigl( \frac{\overline{\varphi}_s-\alpha_a}{2 \kappa_a}  + \frac{\overline{\varphi}_s-\alpha_n}{2 \kappa_n} \Bigr)ds \Bigr\}, \quad t \in [0,T]
\end{cases}
\end{align}
with $A = - \frac{1}{2} \Bigl( \frac{\alpha_a}{2 \kappa_a} + \frac{\alpha_n}{2 \kappa_n} \Bigr)$ and $B =   \frac{1}{2 \kappa_a}  + \frac{1}{2 \kappa_n}$, $C= 2\phi$, $\delta^{\pm} = A \pm \sqrt{A^2 + BC}$, $D=   \frac{\alpha_a}{2 \kappa_a}+ \frac{\alpha_n}{2 \kappa_n}$, $E =  2 \phi$ and $\gamma^{\pm} = \pm \sqrt{DE}$. 
\end{proposition}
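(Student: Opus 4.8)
The plan is to solve the generic player's optimization problem \eqref{eq:mfg-objective-function-4}--\eqref{eq:mfg-dynamics-4} via the stochastic Pontryagin maximum principle, then close the fixed-point by identifying the conditional-expectation process $\boldsymbol\theta$, and finally solve the resulting linear-quadratic McKean-Vlasov FBSDE explicitly using an affine (Riccati) ansatz. First I would write the (reduced) Hamiltonian for the representative player in the state variables $(S_t, Q^a_t, Q^n_t)$ with adjoint processes $(Y^S, Y^a, Y^n)$ and the corresponding $Z$'s; because $\mathbb E^0[\hat\nu^a_t],\mathbb E^0[\hat\nu^n_t]$ enter only the $S$-drift, which is not controlled by the player directly except through the mean field, the pointwise maximization over $(\nu^a,\nu^n)$ of the strictly concave map $\nu^a\mapsto \nu^a y^a - \kappa_a(\nu^a)^2 - \nu^a s$ (and similarly for $n$) gives the first-order conditions $\hat\nu^a_t = \tfrac{1}{2\kappa_a}(Y^a_t - S_t)$ and $\hat\nu^n_t = \tfrac{1}{2\kappa_n}(Y^n_t - S_t)$. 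The adjoint BSDEs are driven by $\partial_{q^a}$ and $\partial_{q^n}$ of the running cost $-\phi(q^a+q^n)^2$ and the terminal conditions come from $\partial g = -2\Psi(q^a+q^n-q_T)$; note $Y^a$ and $Y^n$ satisfy the same linear BSDE with the same terminal data, so $Y^a_t = Y^n_t =: Y_t$, which already explains why $\hat\nu^a$ and $\hat\nu^n$ coincide up to the factor $1/(2\kappa_a)$ vs.\ $1/(2\kappa_n)$.

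Next I would perform the McKean-Vlasov fixed-point substitution. Taking $\mathbb E^0[\cdot]$ through the forward dynamics produces a decoupled pair of systems: one for the ``centered'' quantity $R_t := Q^a_t+Q^n_t$ and its adjoint, and one for the conditional means $\bar R_t := \mathbb E^0[Q^a_t+Q^n_t]$, $\bar Y_t := \mathbb E^0[Y_t]$, with the price $S_t$ coupling in through $\mathbb E^0[\hat\nu^a_t]+\mathbb E^0[\hat\nu^n_t] = (\tfrac{1}{2\kappa_a}+\tfrac{1}{2\kappa_n})(\bar Y_t - \bar S_t)$ and $\bar S_t$ itself obeying a closed ODE once the consistency condition $\theta^a_t=\theta^n_t=\tfrac{1}{2}(\bar Y_t-\bar S_t)/(\ldots)$ is imposed — this is where the constants $A = -\tfrac12(\tfrac{\alpha_a}{2\kappa_a}+\tfrac{\alpha_n}{2\kappa_n})$, $B = \tfrac{1}{2\kappa_a}+\tfrac{1}{2\kappa_n}$, $C=2\phi$, $D=\tfrac{\alpha_a}{2\kappa_a}+\tfrac{\alpha_n}{2\kappa_n}$, $E=2\phi$ emerge as the coefficients of the two linear $2\times2$ FBSDE systems. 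I would then posit the affine ansatz $Y_t = \varphi_t R_t + (\bar\varphi_t-\varphi_t)\bar R_t + \text{(terms in } \bar\chi_t\text{)} + \ldots$ mirroring the stated feedback form, plug it into the BSDE, match the $dt$-coefficients of $R_t$, of $\bar R_t$, and of the deterministic part, and read off: a scalar Riccati ODE for $\varphi$ with terminal value $2\Psi$ (whose linearization via the roots $\gamma^\pm=\pm\sqrt{DE}$ yields the stated closed form), an analogous Riccati ODE for $\bar\varphi$ governed by $\delta^\pm = A\pm\sqrt{A^2+BC}$, and a linear ODE for $\bar\chi_t$ forced by the target $q_T$ with the exhibited exponential-integral solution. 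The $Z$-processes are then determined (and one checks the martingale/integrability so the ansatz is legitimate) and the admissibility $\hat\nu\in\mathbb A^{a/n}$ follows from boundedness of $\varphi,\bar\varphi,\bar\chi$ on $[0,T]$ together with the square-integrability of $(S,Q^a,Q^n)$.

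The main obstacle I anticipate is not the verification step but getting the fixed-point bookkeeping exactly right: one must be careful that the mean-field interaction term $\alpha_a\theta^a_t+\alpha_n\theta^n_t$ feeds back only into $\bar S$ and hence only into the $\bar\varphi,\bar\chi$ (conditional-mean) subsystem, while the fluctuation $R_t-\bar R_t$ sees a pure LQ problem with no interaction — so the two Riccati equations genuinely decouple, with different discriminants ($A^2+BC$ versus $DE$), and the cross-term in the ansatz is precisely $(\bar\varphi_t-\varphi_t)\bar R_t$. Checking that the candidate FBSDE solution is the \emph{unique} one in the relevant space (so that Pontryagin sufficiency applies, using convexity of the Hamiltonian in $(\nu,q)$ for the maximization direction) is the remaining technical point; since the problem is strictly concave in the controls and the cost is concave in the state, the sufficient part of the maximum principle closes the argument without needing a separate uniqueness theorem for the McKean-Vlasov FBSDE.
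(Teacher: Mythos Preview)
Your plan is essentially the paper's own argument: apply the Pontryagin stochastic maximum principle to obtain the FBSDE, observe that $Y^a$ and $Y^n$ satisfy the same BSDE with the same terminal condition (hence coincide), make an affine ansatz in the total inventory, and then split into a conditional-mean subsystem (yielding the Riccati for $\bar\varphi$ with roots $\delta^\pm$) and a fluctuation subsystem (yielding the Riccati for $\varphi$ with roots $\gamma^\pm$), with $\bar\chi$ solving the remaining linear ODE. The paper packages the elimination of $S_t$ by first positing $Y^{a}_t=-S_t+v_a(t,Q^a_t+Q^n_t)$ before going affine in $q$, whereas you absorb it directly through the combination $Y_t-S_t$ in the first-order condition; and the paper solves the $\bar\varphi,\bar\chi$ system first and then returns to the full system to extract $\varphi,\chi$ (identifying $\chi_t=(\bar\varphi_t-\varphi_t)\bar R_t+\bar\chi_t$ a posteriori), whereas you posit the final feedback form from the outset---but these are only differences of bookkeeping, not of method.
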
 

\subsection{Proof of Proposition \ref{mfg-sol-proposition}}
Let us fix a continuous $\mathbb{F}^0$-adapted stochastic process $\boldsymbol{\theta} = (\theta^a_t, \theta^n_t)_{0 \leq t \leq T}$ and solve the optimization problem of a generic representative player \eqref{eq:generic_player_cost} - \eqref{eq:dynamics_MFG_generic_state}. 

\noindent The Hamiltonian is given by : 
\begin{align}
    H(s,\boldsymbol{q}, y^s, \boldsymbol{y}^q, z^s, \boldsymbol{z}^q, \boldsymbol{\theta}, \boldsymbol{\nu}) = & (\alpha_a \theta^a + \alpha_n \theta^n) y^s + \nu^a y^a + \nu^n y^n + \kappa_a \lvert \nu^a \rvert^2 + \kappa_n \lvert \nu^n \rvert^2 + (\nu^a + \nu^n) s   \nonumber \\ 
    & + \phi \lvert q^a + q^n \rvert^2 + \sigma_0 z^s + \sigma_a z^{a,1} + \sigma_n z^{n,2} 
\end{align}  
and is uniquely minimized for: 
\begin{align}
    \hat{\boldsymbol{\nu}}  =   -\begin{pmatrix}
    \frac{1}{2 \kappa_a} & 0 \\
    0 &  \frac{1}{2 \kappa_n}
    \end{pmatrix} (s + \boldsymbol{y})
\end{align}
which is independent of the measure argument $\theta$ and where $\boldsymbol{y} = (y^a, y^n)^{\dagger}$. 

\noindent By the \emph{Pontryagin Stochastic Maximum Principle in Random Environment} (\cite{CarmonaDelarue1}) and using the consistency condition, we obtain the following conditional McKean-Vlasov FBSDE for \\
$(S,Q^a, Q^n,Y^s, Y^a, Y^n, \boldsymbol{Z^0},\boldsymbol{Z^1}, \boldsymbol{Z^2},\boldsymbol{M})$: 
\begin{subequations}
\begin{alignat}{2}
& dS_t = \Big( \alpha_a \mathbb{E}^0\Big[  \hat{\nu}^a_t  \Big] + \alpha_n  \mathbb{E}^0\Big[  \hat{\nu}^n_t  \Big] \Big) dt + \sigma_0 dW^0_t \\
    & dQ^a_t =  
    \hat{\nu}^{a}_t   dt +  
    \sigma_a  d W^a_t \\
      & dQ^n_t =  
    \hat{\nu}^{n}_t   dt +  
    \sigma_n  d W^n_t \\
    & dY^s_t = - (\hat{\nu}^a_t + \hat{\nu}^n_t) dt + Z^{s,0} dW^0_t + Z^{s,1}_t   dW^1_t   + Z^{s,2}_t   dW^2_t + dM^s_t \\ 
    & d Y^a_t =  - 2 \phi \Bigl( Q^a_t + Q^n_t \Bigr) dt 
     + Z^{a,0}_t   dW^0_t   + Z^{a,1}_t   dW^1_t   + Z^{a,2}_t   dW^2_t + dM^a_t\\
         & d Y^n_t =  - 2 \phi \Bigl( Q^a_t + Q^n_t \Bigr) dt 
     +Z^{n,0}_t   dW^0_t   + Z^{n,1}_t   dW^1_t   + Z^{n,2}_t   dW^2_t + dM^n_t\\
   &  Q^a_0 = Q^n_0 = 0, \, Y^s_T = - (Q^a_T + Q^n_T), \,  Y^a_T = Y^n_T = -S_T + 2\Psi(Q^{a} + Q^{n}_T - q_T) 
\end{alignat}
\end{subequations} 
where $M = (M^s, M^a, M^n)$ is a three-dimensional \textit{càd-làg} square-integrable martingale with respect to the filtration $\mathbb{F}$ with 0 as initial condition and of zero cross-variation with $W^0$, $(W^1,W^2)$. 

Choosing $Z_t^{s,0}= 0$, $ Z^{s,1} = -\sigma_a$, $Z_t^{s,2} = -\sigma_n $, and $M^s_t = 0$,  we see that $Y^s_t = - \Big( Q^a_t + Q^n_t \Big)$ solves the first adjoint equation. Let us make the following ansatz:  
\begin{align*}
    \begin{cases}
    Y^{a}_t = - S_t + v_a(t,Q^a_t + Q^n_t) \\ 
    Y^{n}_t = - S_t + v_n(t,Q^a_t + Q^n_t) 
    \end{cases}
\end{align*}
where the unknown functions $(t,q)\mapsto v_i(t,q)$, $i\in \{a,n\}$ satisfy the terminal condition $v_i(T,q)=2\Psi(q-q_T)$.
Then, the above McKean-Vlasov FBSDE system reduces to: 
\begin{align*}
\begin{cases}
    & Z^{a,0}_t = - \sigma_0, \quad Z^{a,1}_t = \sigma_a \partial_{q^a} v_a(t,Q^a_t+ Q^n_t), \quad Z^{a,2}_t = \sigma_a \partial_{q} v_a(t,Q^a_t+ Q^n_t), \quad M^a_t = 0 \\
    &  v_a \text{ s.t. }
     -2\phi q =  -(\alpha_a \mathbb{E}^0[\nu^{a}_t] + \alpha_n \mathbb{E}^0[\nu^{n}_t]) + \partial_t v_a(t,q) + \nu^a_t \partial_{q} v_a(t,q) + \frac{1}{2} \sigma_a^2 \partial^2_{q} v_a(t,q) \\
     & \qquad \qquad \qquad \quad \, +  \nu^n_t \partial_{q} v_a(t,q) + \frac{1}{2} \sigma_n^2 \partial^2_{q} v_a(t,q) \\
    & Z^{n,0}_t = - \sigma_0, \quad Z^{n,1}_t = \sigma_a \partial_{q} v_n(t,Q^a_t+ Q^n_t) , \quad Z^{n,2}_t = \sigma_n \partial_{q} v_n(t,Q^a_t+ Q^n_t), \quad M^n_t = 0  \\
    & v_n \text{ s.t. } 
     -2\phi  q =  -(\alpha_a \mathbb{E}^0[\nu^{a}_t] + \alpha_n \mathbb{E}^0[\nu^{n}_t]) + \partial_t v_n(t,q) + \nu^a_t \partial_{q} v_n(t,q) + \frac{1}{2} \sigma_a^2 \partial^2_{q} v_n(t,q)  \\
     & \qquad \qquad \qquad \quad \, +  \nu^n_t \partial_{q} v_n(t,q) +  \frac{1}{2} \sigma_n^2 \partial^2_{q} v_n(t,q),
\end{cases}
\end{align*}
and using the ansatz in the expressions for $\hat{\nu}$, we obtain : 
\begin{subequations}
\begin{alignat}{2}
      &\hat{\nu}^a_t =  - \frac{ 1}{2\kappa_a} v_a(t, Q^a_t+ Q^n_t)  \\
     &\hat{\nu}^n_t =   - \frac{ 1}{2\kappa_n} v_n(t,  Q^a_t +Q^n_t) 
\end{alignat}
\end{subequations} 
 Plugging these expressions into the partial differential equations for $v_a(t,q) $ and  $v_n(t,q)$, we obtain : 
\begin{align*}
\begin{cases}
\partial_t v_a(t,q) +\frac{\sigma_a^2+\sigma_n^2}{2}\partial^2_q v_a(t,q) - \frac12\Bigl(\frac{1}{\kappa_a}v_a(t,q)+ \frac{1}{\kappa_n}v_n(t,q)\Bigr) & \partial_q v_a(t,q)
+ 2\phi q \\
& - (\alpha_a \mathbb{E}^0[\hat{\nu}^{a}_t] + \alpha_n \mathbb{E}^0[\hat{\nu}^{n}_t]) =0 \\ 
  \partial_t v_n(t,q) +\frac{\sigma_a^2+\sigma_n^2}{2}\partial^2_q v_n(t,q)  - \frac12\Bigl(\frac{1}{\kappa_n}v_n(t,q)+ \frac{1}{\kappa_a}v_a(t,q)\Bigr) & \partial_q v_n(t,q) + 2\phi q \\
 & - (\alpha_a \mathbb{E}^0[\hat{\nu}^{a}_t] + \alpha_n \mathbb{E}^0[\hat{\nu}^{n}_t]) = 0
   \end{cases}
   \end{align*}
with the same terminal conditions $v_a(T, q) = 2\Psi(q - q_T)$ and $v_n(T, q) = 2\Psi(q - q_T)$. 

\vskip 2pt
We pursue the search for $v_a(t,q) $ and  $v_n(t,q)$ by looking for solutions in the forms $v_a(t,q) = \varphi_t q + \chi_t $ and $v_n(t,q) = \zeta_t q + \eta_t $. We first compute the conditional mean functions $\overline{Q}^a_t = \mathbb{E}^0[Q^{a}_t]$, $\overline{Q}^n_t = \mathbb{E}^0[Q^{n}_t]$, $\overline{v}^a(t) = \mathbb{E}^0[v^{a}(t, Q^a_t + Q^n_t)]$ and $\overline{v}^n(t) = \mathbb{E}^0[v^{n}(t, Q^a_t + Q^n_t)]$. The ansatz leads to $\overline{v}^a(t) = \overline{\varphi}_t \mathbb{E}^0[Q^a_t + Q^n_t] + \overline{\chi}_t$ and $\overline{v}^n(t) = \overline{\zeta}_t\mathbb{E}^0[Q^a_t + Q^n_t] + \overline{\eta}_t$. 

\noindent We obtain the following system of ordinary differential equations: 
\begin{subequations}\label{eq:adjoint_ansatz_mf_fixed}
    \begin{alignat}{2}
   &  \dot{\overline{\varphi}}_t - \frac{1}{2 \kappa_a} \overline{\varphi}_t^2 - \frac{1}{2\kappa_n} \overline{\zeta}_t \overline{\varphi}_t +  \frac{\alpha_a}{2 \kappa_a}  \overline{\varphi}_t  + \frac{
   \alpha_n}{2\kappa_n} \overline{\zeta}_t +  2 \phi  = 0, \quad \overline{\varphi}_T = 2 \Psi\\
   & \dot{\overline{\chi}}_t = \frac{\overline{\varphi}_t-\alpha_a}{2 \kappa_a}   \overline{\chi}_t + \frac{\overline{\varphi}_t-\alpha_n}{2 \kappa_n}  \overline{\eta}_t , \quad \overline{\chi}_T = -2 \Psi q_T \label{eq:adjoint_ansatz_mf_fixed_chi}\\ 
    &  \dot{\overline{\zeta}}_t  - \frac{1}{2 \kappa_n} \overline{\zeta}_t^2 - \frac{1}{2\kappa_a} \overline{\zeta}_t \overline{\varphi}_t + \frac{\alpha_n}{2 \kappa_n}   \overline{\zeta}_t + \frac{
   \alpha_a}{2\kappa_a} \overline{\varphi}_t +  2 \phi = 0, \quad \overline{\zeta}_T = 2 \Psi\\
   & \dot{\overline{\eta}}_t  = \frac{\overline{\zeta}_t-\alpha_a}{2 \kappa_a}   \overline{\chi}_t + \frac{\overline{\zeta}_t-\alpha_n}{2 \kappa_n}  \overline{\eta}_t , \quad \overline{\eta}_T = -2 \Psi q_T \label{eq:adjoint_ansatz_mf_fixed_eta}
    \end{alignat}
 \end{subequations}

\noindent Note that because of the symmetry in the Riccati equations, we can search for a solution for which $\varphi_t=\zeta_t$, in which case both equations become 
\begin{equation}
\dot{\overline{\varphi}}_t = 2A\overline{\varphi}_t + B \overline{\varphi}_t^2 - C,\qquad \overline{\varphi}_T=2\Psi
\end{equation}
with $A = - \frac{1}{2} \Bigl( \frac{\alpha_a}{2 \kappa_a} + \frac{\alpha_n}{2 \kappa_n} \Bigr)$ and $B =   \frac{1}{2 \kappa_a}  + \frac{1}{2 \kappa_n}$ and $C= 2\phi$. Since $BC>0$, this scalar Riccati equation has a unique non-exploding solution given by : 
\begin{equation}
    \overline{\varphi}_t = \frac{-C \Bigl(e^{(\delta^+ - \delta^-)(T-t)} -1 \Bigr) - 2 \Psi \Bigl(\delta^+ e^{(\delta^+ - \delta^-)(T-t)} - \delta^-\Bigr)  }{\Bigl(\delta^- e^{(\delta^+ - \delta^-)(T-t)} - \delta^+\Bigr) - 2\Psi B \Bigl(  e^{(\delta^+ - \delta^-)(T-t)} - 1\Bigr) }, \quad t \in [0,T]
\end{equation}
where $\delta^{\pm} = A \pm \sqrt{R}$ with $R = A^2 + BC  > 0$. 
Now that we have a unique solution $(\overline{\varphi}_t, \overline{\zeta}_t)$ on $[0, T]$, we look at the two other equations from the system \eqref{eq:adjoint_ansatz_mf_fixed}. Note that by the symmetry of the equations \eqref{eq:adjoint_ansatz_mf_fixed_chi} and \eqref{eq:adjoint_ansatz_mf_fixed_eta}, we can assume $\overline{\chi} = \overline{\eta}$. We end up with the following linear ordinary differential equation : 
\begin{equation}
    \dot{\overline{\chi}}_t = \Bigl( \frac{\overline{\varphi}_t-\alpha_a}{2 \kappa_a}  + \frac{\overline{\varphi}_t-\alpha_n}{2 \kappa_n} \Bigr) \overline{\chi}_t, \quad \overline{\chi}_T = -2 \Psi q_T
\end{equation}

\noindent This equation has the solution :
\begin{align}
    \overline{\chi}_t = - 2 \Psi q_T \exp \Bigl\{ - \int_t^T \Bigl( \frac{\overline{\varphi}_s-\alpha_a}{2 \kappa_a}  + \frac{\overline{\varphi}_s-\alpha_n}{2 \kappa_n} \Bigr)ds \Bigr\}
\end{align}

\noindent If we introduce the notation $\overline{V}_t =  \overline{Q}^a_t + \overline{Q}^n_t$, by summing the stochastic differential equations satisfied by $Q^a_t$ and $Q^n_t$, and by taking conditional expectations with respect to $\mathbb{F}^0$ we get: 
\begin{equation}
    d\overline{V}_t = -\Bigl(\frac{1}{2 \kappa_a} + \frac{1}{2 \kappa_n} \Bigr)\Bigl( \overline{\varphi}_t \overline{V}_t + \overline{\chi}_t  \Bigr) dt 
\end{equation}

\noindent If $\varphi_t \neq 0, \, t \in [0,T]$, then we obtain a partial differential equation of the form : 
\begin{equation}
    d\overline{V}_t = \theta(t) (f(t) - \overline{V}_t) dt
\end{equation}
where $\theta(t) = \Bigl(\frac{1}{2 \kappa_a} + \frac{1}{2 \kappa_n} \Bigr)  \overline{\varphi}_t $, $f(t) =  -\overline{\varphi}^{-1}_t \overline{\chi}_t $ and the solution is : 

\begin{equation}
\overline{Q}^a_t + \overline{Q}^n_t = \overline{V}_t = \int_0^t  e^{-\int_s^t \theta(r) dr} \theta(s) f(s) ds 
\label{eq:expected_inventory_sum_OU_solution}
\end{equation}

\noindent We can now go back to the original McKean-Vlasov FBSDE and solve for the equilibrium processes $\boldsymbol{Q}$ and $\boldsymbol{Y}$. Using the ansatz $v_a(t,q) = \varphi_t q + \chi_t $ and $v_n(t,q) = \zeta_t q + \eta_t $ and plugging the expression of $\overline{{\boldsymbol{Y}}}$, we obtain the system of ordinary differential equations : 
\begin{subequations}\label{eq:adjoint_ansatz_mf_fixed_original}
    \begin{alignat}{2}
   &  \dot{{\varphi}}_t - \frac{1}{2 \kappa_a} {\varphi}_t^2 - \frac{1}{2\kappa_n} {\zeta}_t {\varphi}_t  +  2 \phi  = 0, \quad {\varphi}_T = 2 \Psi\\
   & \dot{{\chi}}_t = \frac{{\varphi}_t}{2 \kappa_a}   \chi_t + \frac{{\varphi}_t}{2 \kappa_n}  {\eta}_t - \frac{\alpha_a}{2 \kappa_a} \overline{v}^a_t - \frac{\alpha_n}{2 \kappa_n} \overline{v}^n_t  , \quad {\chi}_T = -2 \Psi q_T \label{eq:adjoint_ansatz_mf_fixed_chi_orig}\\ 
    &  \dot{{\zeta}}_t  - \frac{1}{2 \kappa_n} {\zeta}_t^2 - \frac{1}{2\kappa_a} {\zeta}_t {\varphi}_t +  2 \phi = 0, \quad {\zeta}_T = 2 \Psi\\
   & \dot{{\eta}}_t  = \frac{{\zeta}_t}{2 \kappa_a}   {\chi}_t + \frac{{\zeta}_t}{2 \kappa_n}  {\eta}_t - \frac{\alpha_a}{2 \kappa_a} \overline{v}^a_t - \frac{\alpha_n}{2 \kappa_n} \overline{v}^n_t  , \quad {\eta}_T = -2 \Psi q_T .\label{eq:adjoint_ansatz_mf_fixed_eta_orig}
    \end{alignat}
 \end{subequations}

\noindent The first and third equations form a system of coupled Riccati equations. Again, by the symmetry, we deduce that $\varphi = \zeta$ and they are given by : 
\begin{equation}
    \varphi_t = \frac{-D \Bigl(e^{(\gamma^+ - \gamma^-)(T-t)} -1 \Bigr) - 2 \Psi \Bigl(\gamma^+ e^{(\gamma^+ - \gamma^-)(T-t)} - \gamma^-\Bigr)  }{\Bigl(\gamma^- e^{(\gamma^+ - \gamma^-)(T-t)} - \gamma^+\Bigr) - 2\Psi B \Bigl(  e^{(\gamma^+ - \gamma^-)(T-t)} - 1\Bigr) }, \quad t \in [0,T]
\end{equation}
where $\gamma^{\pm} = \pm \sqrt{S}$ with $S = DE > 0$,  $D=   \frac{\alpha_a}{2 \kappa_a}+ \frac{\alpha_n}{2 \kappa_n}$ and $E =  2 \phi$.
By the same symmetry argument, we can assume that $\chi = \eta$ and solve : 

\begin{equation}
     \dot{{\chi}}_t = \frac{{\varphi}_t}{2 \kappa_a}   \chi_t + \frac{{\varphi}_t}{2 \kappa_n}  \chi_t - \frac{\alpha_a}{2 \kappa_a} \overline{v}^a_t - \frac{\alpha_n}{2 \kappa_n} \overline{v}^n_t  , \quad {\chi}_T = -2 \Psi q_T
\end{equation}
by injecting the expression of $\chi$. The solution is then : 
\begin{equation}
    \chi_t = \Bigl( \frac{\alpha_a}{2 \kappa_a} + \frac{\alpha_n}{2 \kappa_n} \Bigr) \int_t^T \overline{v}_s e^{- \Bigl( \frac{\alpha_a}{2 \kappa_a} + \frac{\alpha_n}{2 \kappa_n} \Bigr) \int_t^s \varphi_r dr}ds, \, t \in [0,T]
\end{equation}
where $\overline{v} = \overline{v}^a = \overline{v}^n = \overline{\varphi} \Bigl(\overline{Q}^a + \overline{Q}^n \Bigr) + \overline{\chi}$.

\noindent Since $v^a_t = v^n_t = \varphi_t (Q^a_t + Q^n_t) + \chi_t$, we have $\overline{v}_t = \varphi_t (\overline{Q}^a_t + \overline{Q}^n_t) + \chi_t$, so $\chi_t = (\overline{\varphi}_t - \varphi_t) (\overline{Q}^a_t + \overline{Q}^n_t) + \overline{\chi}_t$. Hence, the optimal controls also take the form : 
\begin{subequations}
\label{eq:feedbackform_meanfield_trading_speeds}
\begin{alignat}{2}
    \hat{\nu}^a_t = -\frac{1}{2 \kappa_a} \Bigl[ \varphi_t (Q^a_t + Q^n_t)  + (\overline{\varphi}_t - \varphi_t) (\overline{Q}^a_t + \overline{Q}^n_t) + \overline{\chi}_t \Bigr] \\ 
    \hat{\nu}^n_t = -\frac{1}{2 \kappa_n} \Bigl[ \varphi_t (Q^a_t + Q^n_t) +  (\overline{\varphi}_t - \varphi_t) (\overline{Q}^a_t + \overline{Q}^n_t) + \overline{\chi}_t \Bigr]
\end{alignat}
\end{subequations}
These expressions are in the form of feedback functions of the current total inventory of the generic trader and the conditional expectations of the total inventories of the whole trading population.

\noindent By denoting $V_t =  Q^a_t + Q^n_t$, we obtain by summation: 
\begin{equation}
    dV_t = -\Bigl(\frac{1}{2 \kappa_a} + \frac{1}{2 \kappa_n} \Bigr)\Bigl( \varphi_t V_t +  (\overline{\varphi}_t - \varphi_t) (\overline{Q}^a_t + \overline{Q}^n_t) + \overline{\chi}_t\Bigr) dt +    \sigma_a  d W^a_t  + \sigma_n  d W^n_t 
\end{equation}

\noindent If $\varphi_t \neq 0, \, t \in [0,T]$, then the dynamics of $V_t$ are given by the stochastic differential equation: 
\begin{equation}
    dV_t = \beta(t) (g(t) - V_t) dt + \sigma_a dW^a_t + \sigma_n dW^n_t
\end{equation}
where $\beta(t) = \Bigl(\frac{1}{2 \kappa_a} + \frac{1}{2 \kappa_n} \Bigr)  \varphi_t $, $g(t) =  -\varphi^{-1}_t \Bigl( (\overline{\varphi}_t - \varphi_t) (\overline{Q}^a_t + \overline{Q}^n_t) + \overline{\chi}_t\Bigr) $ and the solution is : 
\begin{equation}
{Q}^a_t + {Q}^n_t = V_t = \int_0^t e^{-\int_s^t \beta(r) dr} \beta(s) g(s) ds + \int_0^t \sigma_a e^{-\int_s^t \beta(r) dr} dW^a_s + \int_0^t \sigma_n e^{-\int_s^t \beta(r) dr} dW^n_s 
\end{equation}
From the expressions of $\overline{Q}^a_t + \overline{Q}^n_t$ and ${Q}^a_t + {Q}^n_t$, we obtain the following linear feedback-form expressions of the optimal controls: 
\begin{align}
    \hat{\nu}^a_t = -\frac{1}{2 \kappa_a} \Bigl[ & \varphi_t \Bigl(\int_0^t e^{-\int_s^t \beta(r) dr} \beta(s) g(s) ds + \int_0^t \sigma_a e^{-\int_s^t \beta(r) dr} dW^a_s + \int_0^t \sigma_n e^{-\int_s^t \beta(r) dr} dW^n_s \Bigr) \nonumber \\
    & + (\overline{\varphi}_t - \varphi_t) \int_0^t  e^{-\int_s^t \beta(r) dr} \beta(s) f(s) ds + \overline{\chi}_t \Bigr] \\ 
    \hat{\nu}^n_t = -\frac{1}{2 \kappa_n} \Bigl[& \varphi_t \Bigl(\int_0^t e^{-\int_s^t \beta(r) dr} \beta(s) g(s) ds + \int_0^t \sigma_a e^{-\int_s^t \beta(r) dr} dW^a_s + \int_0^t \sigma_n e^{-\int_s^t \beta(r) dr} dW^n_s \Bigr) \nonumber \\
    & + (\overline{\varphi}_t - \varphi_t) \int_0^t  e^{-\int_s^t \beta(r) dr} \beta(s) f(s) ds + \overline{\chi}_t \Bigr] 
\end{align}
Therefore, the difference in the trading speeds of the anonymous or identity attributed inventories stems from the ratio $\frac{\kappa_a}{\kappa_n}$. This will be illustrated in the next section. 

\subsection{$\epsilon$-Nash Equilibrium} 
The goal of this subsection is to show that the solution of the limiting mean field game can be used to provide approximate Nash equilibria for the finite player versions of the model. This type of result is expected in the theory of mean field games so we defer its proof to Appendix \ref{Appendix_A}.

\begin{theorem} \label{thm:epsilon-NE} 
For the finite-population stochastic differential game defined in \eqref{eq:dynamics_N_state} for the choice \eqref{eq:param_AC}. of coefficients and parameters, if we consider the set of controls indexed by 
$j \in \{1,\dots, N\}$ and defined by: 
\begin{align}
    \hat{\nu}^{j,a}_t = -\frac{1}{2 \kappa_a} \Bigl[ \varphi_t (Q^{j,a}_t + Q^{j,n}_t)  + (\overline{\varphi}_t - \varphi_t) (\overline{Q}^a_t + \overline{Q}^n_t) + \overline{\chi}_t \Bigr] \\ 
    \hat{\nu}^{j,n}_t = -\frac{1}{2 \kappa_n} \Bigl[ \varphi_t (Q^{j,a}_t + Q^{j,n}_t)  +  (\overline{\varphi}_t - \varphi_t) (\overline{Q}^a_t + \overline{Q}^n_t) + \overline{\chi}_t \Bigr] 
\end{align}
where $(\overline{Q}^a_t + \overline{Q}^n_t)$ is defined in \eqref{eq:expected_inventory_sum_OU_solution}, then this set of controls provides an $\epsilon$-Nash equilibrium of the finite player stochastic differential game, in the sense: 
\begin{align}
    J^j(\hat{\nu}^j, \hat{\nu}^{-j}) \leq \underset{\boldsymbol{\nu} \in \mathcal{A}^j}{\sup} J^j(\nu, \hat{\nu}^{-j}) \leq J^j(\hat{\nu}^j, \hat{\nu}^{-j}) + O(\frac{1}{N})
\end{align}
\end{theorem}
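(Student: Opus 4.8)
\noindent\emph{Proof strategy.}
The left-hand inequality is immediate, since $\hat\nu^j\in\mathcal{A}^j$; the content is the right-hand one. Fix $j$, let the remaining players use $\hat\nu^{-j}$, and let $\nu^j\in\mathcal{A}^j$ be a competing (distributed open-loop, i.e. $(W^0,W^{j,1},W^{j,2})$-adapted) control; we may assume $J^j(\nu^j,\hat\nu^{-j})\ge J^j(\hat\nu^j,\hat\nu^{-j})$, since otherwise there is nothing to prove. The plan is to couple the finite game with the decoupled mean-field control problem \eqref{eq:generic_player_cost}--\eqref{eq:dynamics_MFG_generic_state} run with the equilibrium flow $\boldsymbol\theta$ and the associated price process $S$ built from the feedback controls $\hat{\boldsymbol\nu}$ of Proposition \ref{mfg-sol-proposition}, and to establish the \emph{comparison estimate}
\[
\bigl| J^j(\nu^j,\hat\nu^{-j}) - J(\nu^j,\boldsymbol\theta) \bigr| \;\le\; \frac{C\,(1 + \lVert\nu^j\rVert^2)}{N}, \qquad \lVert\nu^j\rVert^2 := \mathbb{E}\Bigl[\int_0^T \bigl(|\nu^{j,a}_t|^2 + |\nu^{j,n}_t|^2\bigr)\,dt\Bigr],
\]
with $C$ independent of $N$ and of $\nu^j$. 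Granting this, I would apply it first with $\nu^j=\hat\nu^j$ — whose norm is a fixed explicit constant, read off \eqref{eq:feedbackform_meanfield_trading_speeds} since $Q^{j,a}+Q^{j,n}$ then solves a linear SDE driven by $(W^{j,1},W^{j,2})$ — to deduce that $J^j(\hat\nu^j,\hat\nu^{-j})$ is bounded below uniformly in $N$; combined with the coercivity of $J^j$ this yields a uniform a priori bound $\lVert\nu^j\rVert^2\le K$ for every competitor with $J^j(\nu^j,\hat\nu^{-j})\ge J^j(\hat\nu^j,\hat\nu^{-j})$; and then the chain
\[
J^j(\nu^j,\hat\nu^{-j}) \;\le\; J(\nu^j,\boldsymbol\theta) + \tfrac{C(1+K)}{N} \;\le\; J(\hat{\boldsymbol\nu},\boldsymbol\theta) + \tfrac{C(1+K)}{N} \;\le\; J^j(\hat\nu^j,\hat\nu^{-j}) + O(1/N),
\]
in which the middle inequality is the optimality of $\hat{\boldsymbol\nu}$ for the mean-field problem (Proposition \ref{mfg-sol-proposition}), concludes.

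For the a priori bound I would argue by coercivity. In $J^j$ the terminal and running inventory penalties $-\Psi(Q^{j,a}_T+Q^{j,n}_T-q_T)^2$ and $-\phi\int_0^T(Q^{j,a}_t+Q^{j,n}_t)^2\,dt$ are bounded above by constants, the temporary-impact term contributes $-\kappa_{\min}\lVert\nu^j\rVert^2$ with $\kappa_{\min}=\min(\kappa_a,\kappa_n)>0$, and the only remaining terms with potential quadratic growth in $\lVert\nu^j\rVert$ are $\mathbb{E}[(Q^{j,a}_T+Q^{j,n}_T)S^N_T]$ and $\mathbb{E}[\int_0^T(\nu^{j,a}_t+\nu^{j,n}_t)S^N_t\,dt]$. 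Writing $S^N=S+R^N$, where $S$ is the mean-field price (which does not depend on $\nu^j$ nor on $N$ and has bounded moments) and where $\mathbb{E}[\sup_{t\le T}|R^N_t|^2]\le C N^{-1}(1+\lVert\nu^j\rVert^2)$ — established in the last step — I would split each product: the $S$-part is absorbed into $\tfrac{\kappa_{\min}}{4}\lVert\nu^j\rVert^2$ plus a constant by Young's inequality together with the bound $\mathbb{E}[\sup_t|Q^{j,a}_t+Q^{j,n}_t|^2]\le C(1+\lVert\nu^j\rVert^2)$, while the $R^N$-part is $O(N^{-1/2})(1+\lVert\nu^j\rVert^2)$, hence below $\tfrac{\kappa_{\min}}{4}\lVert\nu^j\rVert^2$ once $N$ is large. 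Rearranging and using the uniform lower bound on $J^j(\hat\nu^j,\hat\nu^{-j})$ gives $\tfrac{\kappa_{\min}}{2}\lVert\nu^j\rVert^2\le\mathrm{const}$.

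The heart of the matter is the estimate on $S^N-S$, and this is where the structure of this particular model is essential. When players $i\ne j$ use the feedback \eqref{eq:feedbackform_meanfield_trading_speeds}, their total inventories $Q^{i,a}_t+Q^{i,n}_t$ solve \emph{linear} SDEs driven only by the idiosyncratic noises $(W^{i,1},W^{i,2})$ — the optimal feedback is price-independent, and $\varphi_t$, $\overline\varphi_t$, $\overline\chi_t$ and $\overline{Q}^a_t+\overline{Q}^n_t$ of \eqref{eq:expected_inventory_sum_OU_solution} are deterministic — so the triples $\{(\hat\nu^{i,a},\hat\nu^{i,n},Q^{i,a}+Q^{i,n})\}_{i\ne j}$ are i.i.d. and, in particular, independent of $W^0$, of $\nu^j$, and of $Q^{j,a}+Q^{j,n}$. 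Subtracting \eqref{eq:mfg-dynamics-4} from \eqref{eq:dynamics_N_midprice_state} the $\sigma_0W^0$ terms cancel, leaving $S^N_t-S_t=\int_0^t[\alpha_a(\bar\nu^{N,a}_s-\mathbb{E}[\hat\nu^{1,a}_s])+\alpha_n(\bar\nu^{N,n}_s-\mathbb{E}[\hat\nu^{1,n}_s])]\,ds$ with $\bar\nu^{N,a}_s=\tfrac1N\sum_{i=1}^N\nu^{i,a}_s$. Decomposing $\bar\nu^{N,a}_s$ into the single player-$j$ contribution (of size $1/N$ in $L^2$ once $\lVert\nu^j\rVert\le K$) and the empirical average over $i\ne j$, whose deviation from the common mean is a normalized sum of i.i.d. centered variables of $L^2$-norm $O(N^{-1/2})$, squaring, integrating and Cauchy--Schwarz give $\mathbb{E}[\sup_{t\le T}|S^N_t-S_t|^2]=O(1/N)$. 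Crucially, in the two cross terms $\mathbb{E}[(Q^{j,a}_T+Q^{j,n}_T)(S^N_T-S_T)]$ and $\mathbb{E}[\int_0^T(\nu^{j,a}_t+\nu^{j,n}_t)(S^N_t-S_t)\,dt]$ that make up $J^j(\nu^j,\hat\nu^{-j})-J(\nu^j,\boldsymbol\theta)$ — the inventory $Q^{j,\cdot}$ being literally the same process in both problems, as it is driven by $\nu^j$ and $(W^{j,1},W^{j,2})$ only — the empirical-average-over-$i\ne j$ fluctuation is centered and independent of $(\nu^j,Q^{j,a}+Q^{j,n})$, so its contribution factorizes and vanishes exactly, and only the $O(1/N)$ self-interaction remains, yielding the comparison estimate.

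I expect two points to require genuine care. The first is closing the a priori bound: it needs the Young-inequality bookkeeping above and the smallness of $R^N$, and there is a mild circularity (the $R^N$ estimate only needs $\lVert\nu^j\rVert$ finite, which is fine, while the lower bound on $J^j(\hat\nu^j,\hat\nu^{-j})$ needs the comparison estimate, which however is available unconditionally for the explicit control $\hat\nu^j$), resolved by the order indicated above. The second, and the real source of the sharp rate, is obtaining $O(1/N)$ rather than the $O(1/\sqrt N)$ that a blunt Cauchy--Schwarz on the cross terms would produce; this works here precisely because the equilibrium feedback does not see the price — hence neither the common noise nor the deviating player — so that the empirical fluctuation of the controls is mean-zero and independent of the deviator and disappears to leading order. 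Beyond this one should also check the admissibility inclusion $\mathcal{A}^j\subset\mathbb{A}$ and that the equilibrium flow $\boldsymbol\theta$ of Proposition \ref{mfg-sol-proposition} is indeed the limit matched by the finite-population empirical averages (the consistency condition together with the conditional propagation-of-chaos proposition of Section \ref{section:finite}), both routine.
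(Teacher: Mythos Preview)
Your overall route is the paper's---compare $J^j(\nu^j,\hat\nu^{-j})$ to the mean-field cost $J(\nu^j,\boldsymbol\theta)$ through an estimate on $S^N-S$, then sandwich via the optimality of $\hat{\boldsymbol\nu}$---but the execution differs in two places worth noting. First, the paper bounds $\mathbb{E}[(Q^{j,a}_T+Q^{j,n}_T)^2]$ by a constant $K_j$ that depends on the particular deviation $\nu^j$ and then takes a supremum as if it were uniform; your coercivity step supplies exactly the uniform a~priori bound that is missing there. Second, to reach the rate $O(1/N)$ the paper applies Cauchy--Schwarz and then a Gronwall argument forcing $\mathbb{E}\int_0^T(\tfrac1N\sum_k\hat\nu^{k,a}_t-\bar\nu^a_t)^2\,dt$ to vanish identically; that step writes $\hat Q^{k,a}_t-\bar Q^a_t$ as $\int_0^t(\hat\nu^{k,a}_s-\bar\nu^a_s)\,ds$, i.e.\ it drops the idiosyncratic increment $\sigma_a W^{k,1}_t$, and with that term restored Gronwall only gives $O(1/N)$ for the \emph{squared} deviation and hence $O(1/\sqrt N)$ after Cauchy--Schwarz. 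Your approach sidesteps this: because the equilibrium feedback is price- and common-noise-free, the $i\neq j$ fluctuation in $S^N-S$ is centered and independent of any $(W^0,W^{j,1},W^{j,2})$-adapted deviation, so its contribution to the two cross terms factorizes and vanishes exactly, leaving only the $O(1/N)$ self-interaction. This is cleaner and legitimately recovers the stated rate, at the price of restricting $\mathcal{A}^j$ to distributed open-loop controls---a restriction the paper's finite-game setup in Section~\ref{section:finite} also works under, so your reading is consistent.
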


\section{Numerical Results} 
\label{section:numerics}

This section provides numerical illustrations of the properties of the Nash equilibrium for the mean field game solved in the previous section.

\vskip 2pt
The closed-form expressions of the total inventory of the representative player, the conditional expectation of the total inventory of the mean field population of traders and the trading speeds in feedback form  \eqref{eq:feedbackform_meanfield_trading_speeds} enable us to compute the optimal trading speeds. We proceed to do just that for a given set of parameters. Next, we shall illustrate the influence of the ratio $\frac{\kappa_a}{\kappa_n}$. 

\vskip 2pt
For the purpose of our numerical computations we use the following set of parameters. We shall emphasize the instances in which some of these values will be modified. 

\begin{equation}
\boxed{
\begin{array}{rcl}
& \alpha_a = 4\times10^{-3}, \alpha_n = 5\times10^{-3}  & \text{permanent price impacts}\\
& \kappa_a = 1.5\times10^{-3}, \kappa_n = 3\times10^{-3}  & \text{temporary price impacts}\\
& \sigma_a = 2, \sigma_n = 4  & \text{volatilities}\\
& \Psi = 1, \phi = 1 & \text{terminal penalization and risk aversion} \\
& q_T = 200, T = 1.0   & \text{target inventory} \text{ and time horizon} \\

\end{array}
}
\label{eq:params_simulation_solo_set_0}
\end{equation}

\begin{figure}[H]
    \centering
    \begin{subfigure}[t]{0.45\textwidth}
        \centering
        \includegraphics[width = 7cm]{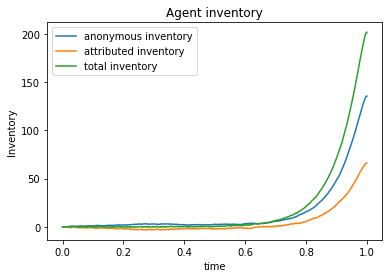}
        \caption{Optimal trading speeds}
            \label{fig:agent_trading_speeds_set_0}
    \end{subfigure}%
    ~ 
    \begin{subfigure}[t]{0.45\textwidth}
        \centering
          \includegraphics[width = 7cm]{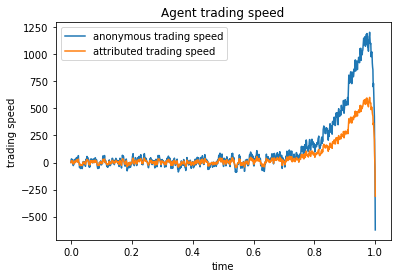}
    \caption{Optimal inventories}
    \label{fig:agent_inventories_set_0}
    \end{subfigure}
    \caption{Inventory and trading speed of the representative agent for the set of parameters \eqref{eq:params_simulation_solo_set_0}.}
\end{figure}

\noindent Plotting the expectation of the difference in the inventories with respect to the ratio $\frac{\kappa_a}{\kappa_n}$ with $\kappa_n = 2 \times 10^{-3}$ leads to the following surface:

\begin{figure}[H]
    \centering
    \includegraphics[width = 10cm]{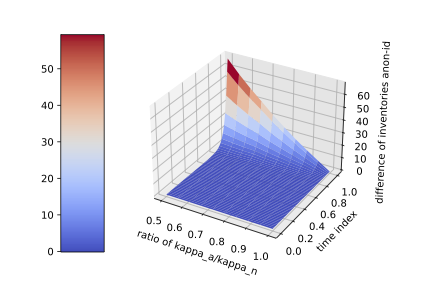}
    \caption{Sensitivity analysis of the inventory difference $Q^a_t - Q^n_t$ with respect to $\kappa_a/\kappa_n$}
    \label{fig:mf_3d_plot}
\end{figure}
Figure \ref{fig:mf_3d_plot} shows that the lower the ratio $\frac{\kappa_a}{\kappa_n}$, the bigger the difference in the inventories grows over time. In our case, the broker will accumulate more inventory through anonymous trading process than with their identity. It therefore corroborates what was obtained theoretically, namely the dependance of the optimal trading rates (and hence the inventories) on the temporary price impacts.

\subsubsection{Sensitivity to the running penalty}

We study the sensitivity of the optimal controls and inventories with respect to the running penalty by choosing $\phi \in \{10,0.01\}$ in the set of parameters \eqref{eq:params_simulation_solo_set_0}. 

\begin{figure}[H]
    \centering
    \begin{subfigure}[t]{0.45\textwidth}
        \centering
        \includegraphics[width = 7cm]{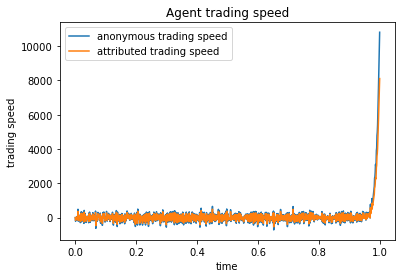}
        \caption{Optimal trading speeds}
    \end{subfigure}%
    ~ 
    \begin{subfigure}[t]{0.45\textwidth}
        \centering
          \includegraphics[width = 7cm]{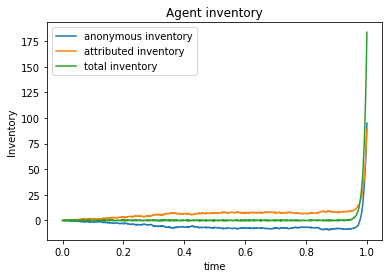}
    \caption{Optimal inventories}
    \end{subfigure}
    \caption{Behavior of the representative agent with $\phi = 10$ in \eqref{eq:params_simulation_solo_set_0}}
\end{figure}

A high running penalty confirms the intuition that the inventory held throughout the trading period should be close to zero, and that there is an overshoot at the end of the trading period to match the terminal inventory condition. Note that this is characteristic of the solutions of backward Riccati type equations. Relaxing the running penalty enables deviation from the final target in order to benefit from the opportunities brought by the difference in the temporary price impacts. The following figures in \ref{fig:small_running} illustrate this phenomenon, where there is a progressive accumulation of position in the stock until the middle of the trading period, before slowly liquidating the excess position. 

\begin{figure}[H]
    \centering
    \begin{subfigure}[t]{0.45\textwidth}
        \centering
        \includegraphics[width = 7cm]{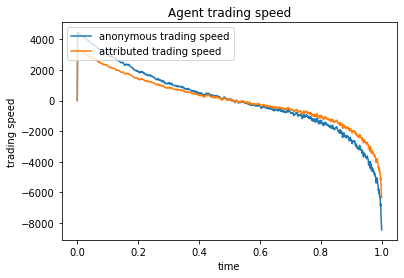}
        \caption{Optimal trading speeds}
            \label{fig:agent_trading_speeds_small_running}
    \end{subfigure}%
    ~ 
    \begin{subfigure}[t]{0.45\textwidth}
        \centering
          \includegraphics[width = 7cm]{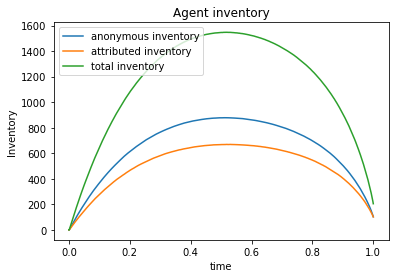}
    \caption{Optimal inventories}
    \label{fig:agent_inventories_small_running}
    \end{subfigure}
    \caption{Behavior of the representative agent with $\phi = 0.1$ in \eqref{eq:params_simulation_solo_set_0}}
    \label{fig:small_running}
\end{figure}
 
\subsubsection{Sensitivity to the terminal penalty}
We study the sensitivity of the optimal controls and inventories with respect to the terminal penalty by choosing $\Psi = 0.01$ in the set of parameters \eqref{eq:params_simulation_solo_set_0}. 

\begin{figure}[H]
    \centering
    \begin{subfigure}[t]{0.45\textwidth}
        \centering
        \includegraphics[width = 7cm]{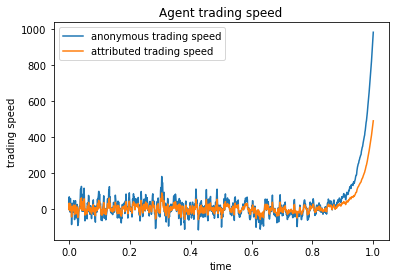}
        \caption{Optimal trading speeds}
            \label{fig:agent_trading_speeds_set_4}
    \end{subfigure}%
    ~ 
    \begin{subfigure}[t]{0.45\textwidth}
        \centering
          \includegraphics[width = 7cm]{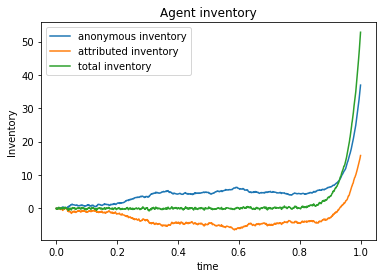}
    \caption{Optimal inventories}
    \label{fig:agent_inventories_set_4}
    \end{subfigure}
    \caption{Behavior of the representative agent for $\Psi = 0.01$ in \eqref{eq:params_simulation_solo_set_0}}
\end{figure}

The result of a small terminal penalty confirms the intuition that the terminal inventory is now allowed to be far from its target. Indeed, when the coefficient of the terminal penalty is small, the agent is willing to pay a small liquidation premium to buy the missing inventory, hence the lower inventory compared to the target of $q_T = 200$. 

\subsection{The Turnpike Effect} 

In this subsection, we highlight the \textit{turnpike effect} that is present in our model. Here we use the set of parameters 
\begin{equation}
\boxed{
\begin{array}{rcl}
& \Psi = 1, \phi \in \{1, 10\} & \text{terminal penalization and risk aversion} \\
& q_T \in \{200, 0\} & \text{target inventory}\\
& T = 1.0  & \text{time horizon} 
\end{array}
}
\label{eq:params_simulation_solo_set_turnpike}
\end{equation}
which will be modified when needed. 
\begin{figure}[H]
    \centering
    \begin{subfigure}[t]{0.45\textwidth}
        \centering
        \includegraphics[width = 7cm]{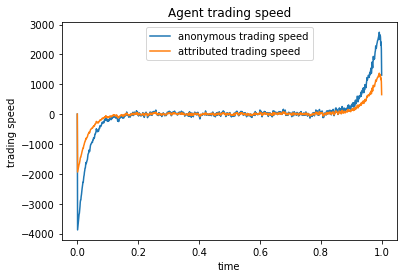}
        \caption{Optimal trading speeds}
    \end{subfigure}%
    ~ 
    \begin{subfigure}[t]{0.45\textwidth}
        \centering
          \includegraphics[width = 7cm]{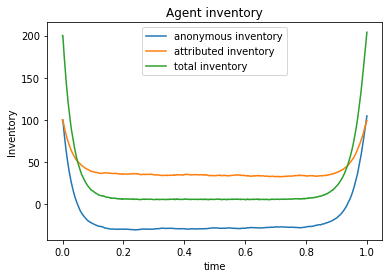}
    \caption{Optimal inventories}
    \end{subfigure}
    \caption{Illustration of the \textit{turnpike effect} for an agent's inventories and optimal controls for $\phi = 1$ in  \eqref{eq:params_simulation_solo_set_turnpike}}
\end{figure}
 
For this set of parameters, the turnpike effect is well-noticeable. When the terminal penalty is high enough, it acts as if both the initial state and the final state are\emph{essentially prescribed}. The optimal strategy of an agent is split into three pieces: the first and the last pieces are transient and exponentially short in time. They represent the transfer of the agent’s position in the stock from the initial state to the turnpike and from the latter to the final state. The middle part is a relatively long in time. It represents an efficient time evolution with minimal cost. It is what we call the turnpike. The transfer of an agent into and from the turnpike is sharper with the growth of parameter $\phi$ (as illustrated below), which is intuitive since $\phi$ controls the running penalty. Note that here we chose $Q^{\text{total}}_0 = q_T$ in order to make the different phases easier to see, but this turnpike effect is still present when the initial and final conditions are different.  
\begin{figure}[H]
    \centering
    \begin{subfigure}[t]{0.45\textwidth}
        \centering
        \includegraphics[width = 7cm]{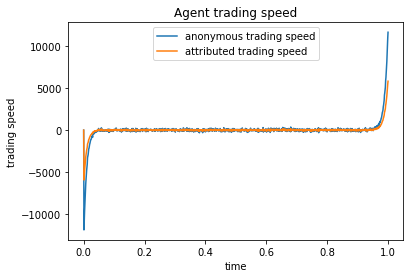}
        \caption{Optimal trading speeds}
    \end{subfigure}%
    ~ 
    \begin{subfigure}[t]{0.45\textwidth}
        \centering
          \includegraphics[width = 7cm]{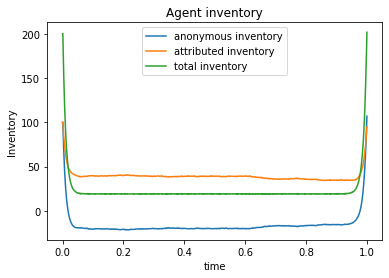}
    \caption{Optimal inventories}
    \end{subfigure}
    \caption{Illustration of the \textit{turnpike effect} for an agent's inventories and optimal controls $\phi = 10$ in  \eqref{eq:params_simulation_solo_set_turnpike}}
\end{figure}

\begin{figure}[H]
    \centering
    \includegraphics[width = 10cm]{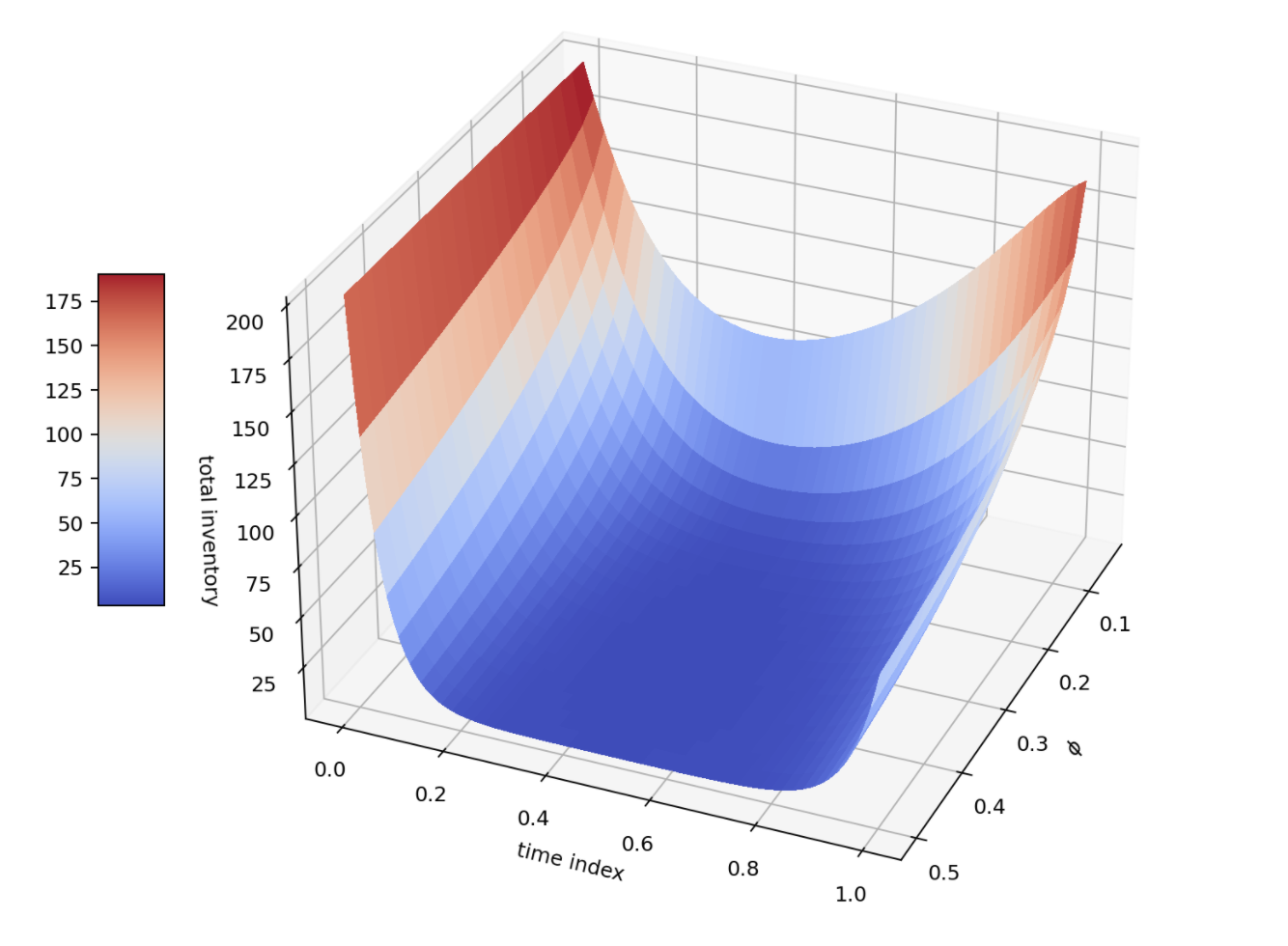}
    \caption{Sensitivity analysis of the total inventory $Q^a_t + Q^n_t$ with respect to $\phi$}
\end{figure}
For the chosen set of parameters, the critical value to observe a ``real'' turnpike phenomenon is around $\phi = 0.1$, all else being equal in \eqref{eq:params_simulation_solo_set_turnpike}.  Indeed, below this value, the turnpike effect dissipates, in the sense that the middle phase is non-existent: the two transient phases are successive, as illustrated in Figure  \ref{fig:successive-transient-turnpike}. Another instance when the turnpike effect disappears is the case when the initial and terminal inventory conditions are zero and the running cost is big enough. There are no distinct transition phases but only the central one (see Figure \ref{fig:main-pahse-turnpike}), which mainly indicates that the initial conditions and the target are on the turnpike.  

\begin{figure}[H]
    \centering
    \begin{subfigure}[t]{0.45\textwidth}
        \centering
        \includegraphics[width = 7.7cm]{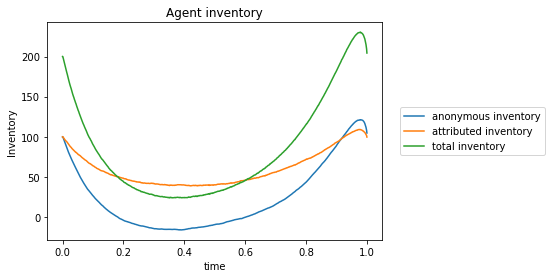}
        \caption{$Q_0^a = 100$, $Q_0^n = 100$, $q_T = 100$ and $\phi = 0.099$}
        \label{fig:successive-transient-turnpike}
    \end{subfigure}%
    ~ 
    \begin{subfigure}[t]{0.45\textwidth}
        \centering
          \includegraphics[width = 7.7cm]{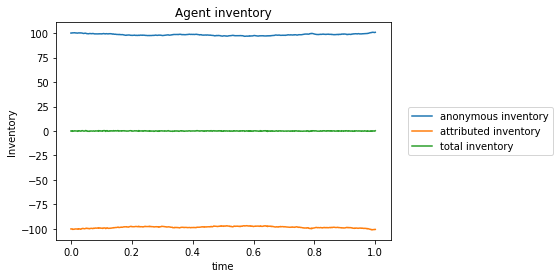}
    \caption{$Q_0^a = 100$, $Q_0^n = -100$, $q_T = 0$ and $\phi = 1$}
    \label{fig:main-pahse-turnpike}
    \end{subfigure}
    \caption{Illustration of the disappearance of the \textit{turnpike effect} with \eqref{eq:params_simulation_solo_set_turnpike}}
\end{figure}

\section{Conclusion}

In this work, we looked at a new model of optimal execution  with common noise in which the traders have the option to trade anonymously or reveal their identity. We formulated the finite-player stochastic differential game, where the interaction occurs through the impact that all traders have on the drift of the mid-price through both anonymous and identity-revealed trading rates. In the limit of a large number of agents, the interaction between the agents is captured by the conditional means of their controls. We then formulated and solved the limiting mean field limit problem before considering a specific model with Almgren-Chriss price impacts for which we show that the mean field solution provides  approximate Nash equilibriums for the finite player models. In this particular model, a representative agent uses optimal controls based on their current private inventory state $(Q^a_t, Q^n_t)$  at time $t$ and the information brought by the common noise $\mathcal{F}^0_t$ through the conditional mean of the total inventories, $\mathbb{E}^0[Q^a_t + Q^n_t]$, which represents the limit of the empirical mean of the inventories of all the brokers when their number tends to infinity. For the specific model, we derived and showed numerically that the strategic use of anonymity and identity stems from a difference in the temporary price impacts. This theoretically supports the intuition that the simultaneous empirical activity of both the anonymous and attributed trading on the Toronto Stock Exchange can be explained by a difference in execution costs. 

\begin{appendix}
\section{Proof of Theorem \ref{thm:epsilon-NE}} \label{Appendix_A}

Let us denote $\overline{\nu} = (\overline{\nu}^a, \overline{\nu}^n)$ with $\overline{\nu}^a_t = \underset{N \rightarrow \infty}{\lim} \frac{1}{N}\sum_{i=1}^N \nu^{i,a}_t$ and $\overline{\nu}^n_t = \underset{N \rightarrow \infty}{\lim} \frac{1}{N}\sum_{i=1}^N \nu^{i,n}_t$. 
By using the representation after Integration by Parts formula of the cost functionals, we obtain :
\begin{align}
    & \Big\lvert J^j(\nu, \hat{\nu}^{-j}) - J^j(\nu, \bar{\nu}) \Big\rvert =  \Big\lvert \mathbb{E}[\int_0^T (Q^{j,a}_t + Q^{j,n}_t) d(S^{N,\bar{\nu}}_t - S^{\bar{\nu}}_t)]  \Big\rvert = \Big\lvert \mathbb{E}[(Q^{j,a}_T + Q^{j,n}_T) (S^{N,\bar{\nu}}_T - S^{\bar{\nu}}_T)]  \Big\rvert
    \label{eq:difference_of_functionals} 
\end{align}
By using the definition of $S^{N,\bar{\nu}}_t$ , $S^{\bar{\nu}}_t$ and splitting the average of the controls into the $j$-th individual one and the rest as 
\begin{align*}
   & \frac{1}{N} \nu^a_t + \frac{1}{N}\sum_{k \in [N] \, \text{s.t.} k \neq j} \hat{\nu}^{k,a}_t - \bar{\nu}^a_t = \frac{1}{N} (\nu^a_t - \hat{\nu}^a_t)+ \frac{1}{N}\sum_{k \in [N]} \hat{\nu}^{k,a}_t - \bar{\nu}^a_t  \\
   & \frac{1}{N} \nu^n_t + \frac{1}{N}\sum_{k \in [N] \, \text{s.t.} k \neq j} \hat{\nu}^{k,n}_t  - \bar{\nu}^n_t  = \frac{1}{N} (\nu^n_t -\hat{ \nu}^n_t ) + \frac{1}{N}\sum_{k \in [N]} \hat{\nu}^{k}_t  - \bar{\nu}^n_t
\end{align*}
and defining $\Sigma_t =  \alpha_a \Big( \frac{1}{N} (\nu^a_t - \hat{\nu}^a_t)+ \frac{1}{N}\sum_{k \in [N]} \hat{\nu}^{k,a}_t - \bar{\nu}^a_t \Big)+\alpha_n \Big( \frac{1}{N} (\nu^n_t -\hat{ \nu}^n_t ) + \frac{1}{N}\sum_{k \in [N]} \hat{\nu}^{k}_t  - \bar{\nu}^n_t \Big)$, we obtain: 
\begin{align}
    \eqref{eq:difference_of_functionals}
        & \leq   \mathbb{E} \Big[(Q^{j,a}_T + Q^{j,n}_T)^2\Big]^{1/2} \mathbb{E} \Big[ \int_0^T \Sigma_t^2  dt \Big]^{1/2}    \quad \text{ by Cauchy-Schwarz }
\end{align}

Let us show that there exists a constant $K_j$ such that $\underset{t\in[0,T]}{\sup} \mathbb{E} \Big[(Q^{j,a}_t + Q^{j,n}_t)^2\Big] < K_j$. Let $t\in [0,T]$. 
\begin{align*}
    \mathbb{E} \Big[(Q^{j,a}_t + Q^{j,n}_t)^2\Big] & = \mathbb{E} \Big[(\int_0^t \nu^{j,a}_s ds + \int_0^t \nu^{j,n}_s ds)^2\Big]  \\
    & \leq 2 \mathbb{E} \Big[(\int_0^t \nu^{j,a}_s ds)^2 + (\int_0^t \nu^{j,n}_s ds)^2\Big] \\
     & \leq 2 \mathbb{E} \Big[\int_0^t (\nu^{j,a}_s )^2 ds + \int_0^t (\nu^{j,n}_s)^2 ds \Big] \quad \text{ by Jensen } \\
    & \leq 2 \mathbb{E} \Big[\int_0^T (\nu^{j,a}_s )^2 ds \Big] + 2\mathbb{E} \Big[\int_0^T (\nu^{j,n}_s)^2 ds \Big]  < \infty  
\end{align*}
since the controls are admissible. Since the last upper bound does not depend on $t$, there exists a constant $K_j$ independent of $t$ such that $$\mathbb{E} \Big[(Q^{j,a}_t + Q^{j,n}_t)^2\Big] < K_j$$
Hence the claim. 

\noindent Going back to \eqref{eq:difference_of_functionals} and using the claim, we obtain : 
\begin{align}
    \eqref{eq:difference_of_functionals} & \leq K_j^{1/2} \mathbb{E} \Big[\int_0^T  \Sigma_t^2 dt \Big]^{1/2} 
\end{align}
Let us work on $\mathbb{E} \Big[ \int_0^T\Sigma_t^2  dt\Big]$. By definition of $\Sigma_t$ and by using a classical inequality : 
\begin{align}
    \mathbb{E} \Big[ \int_0^T\Sigma_t^2  dt\Big] & \leq 4 \mathbb{E} \Big[ \int_0^T \Big\{  \frac{\alpha_a^2 }{N^2} (\nu^a_t - \hat{\nu}^a_t)^2+ \alpha_a^2(\frac{1}{N}\sum_{k \in [N]} \hat{\nu}^{k,a}_t - \bar{\nu}^a_t)^2 \nonumber \\
    & \qquad \qquad \quad + \frac{\alpha_n^2}{N^2} (\nu^n_t -\hat{ \nu}^n_t )^2 +\alpha_n^2 (\frac{1}{N}\sum_{k \in [N]} \hat{\nu}^{k}_t  - \bar{\nu}^n_t)^2 \Big\} dt\Big]  \nonumber \\
    & \leq 4 \max(\alpha_a^2, \alpha_n^2) \cdot \Big\{ \frac{1}{N^2} \Big(  \mathbb{E} \Big[ \int_0^T (\nu^a_t - \hat{\nu}^a_t)^2 dt \Big] +  \mathbb{E} \Big[\int_0^T (\nu^n_t -\hat{ \nu}^n_t )^2 dt\Big]\Big) \nonumber \\
    & \qquad \qquad \mathbb{E} \Big[\int_0^T (\frac{1}{N}\sum_{k \in [N]} \hat{\nu}^{k,a}_t - \bar{\nu}^a_t)^2 dt \Big] + \mathbb{E} \Big[ \int_0^T (\frac{1}{N}\sum_{k \in [N]} \hat{\nu}^{k}_t  - \bar{\nu}^n_t)^2 dt \Big] \Big\} 
\end{align}

Since $\nu^a, \hat{\nu}^a$ and $\nu^n, \hat{\nu}^n$ are in $\mathbb{H}^2$, there exists a constant $L$ such that 
\begin{equation}
    \mathbb{E} \Big[ \int_0^T (\nu^a_t - \hat{\nu}^a_t)^2 dt \Big] +  \mathbb{E} \Big[\int_0^T (\nu^n_t -\hat{ \nu}^n_t )^2 dt\Big] < L
    \label{eq:inequality_1}
\end{equation}

Let us define $D(r) := \mathbb{E} \Big[\int_0^r (\frac{1}{N}\sum_{k \in [N]} \hat{\nu}^{k,a}_t - \bar{\nu}^a_t)^2 dt \Big], r \in[0,T]$. Let $r \in [0,T]$.
\begin{align*}
    D(r) & =   \int_0^r \mathbb{E} \Big[(\frac{1}{N}\sum_{k \in [N]} \hat{\nu}^{k,a}_t - \bar{\nu}^a_t)^2 \Big] dt \text{ by Fubini-Tonelli} \\
    & =   \int_0^r \mathbb{E} \Big[(\frac{1}{N}\sum_{k \in [N]} \Big\{ \frac{-1}{2 \kappa_a} \varphi(t) (\hat{Q}^{k,a}_t + \hat{Q}^{k,n}_t - \mathbb{E}[ \hat{Q}^{k,a}_t + \hat{Q}^{k,n}_t| \mathcal{F}^0_t] )^2 \Big\} \Big] dt \\
    & =   \int_0^r \mathbb{E} \Big[\Big(\frac{1}{N}\sum_{k \in [N]} \frac{-1}{2 \kappa_a} \varphi(t) (\hat{Q}^{k,a}_t - \bar{Q}^{a}_t + \hat{Q}^{k,n}_t - \bar{Q}^{n}_t) \Big)^2 \Big] dt \\
        & \leq \frac{1}{4 \kappa_a^2}\lVert \varphi \rVert_{L^{\infty}([0,T])}   \int_0^r \mathbb{E} \Big[ \Big(\frac{1}{N}\sum_{k \in [N]} (\hat{Q}^{k,a}_t - \bar{Q}^{a}_t + \hat{Q}^{k,a}_t - \bar{Q}^{a}_t) \Big)^2 \Big] dt \\
        & \leq  \frac{1}{2 \kappa_a^2}\lVert \varphi \rVert_{L^{\infty}([0,T])}   \int_0^r \mathbb{E} \Big[ \Big(\frac{1}{N}\sum_{k \in [N]} \hat{Q}^{k,a}_t - \bar{Q}^{a}_t\Big)^2 +  \Big(\frac{1}{N}\sum_{k \in [N]} \hat{Q}^{k,a}_t - \bar{Q}^{a}_t \Big)^2 \Big] dt \\
        & \leq \frac{1}{2 \kappa_a^2} \lVert \varphi \rVert_{L^{\infty}([0,T])}   \int_0^r \mathbb{E} \Big[ \Big(\frac{1}{N}\sum_{k \in [N]} \int_0^t  \Big\{\hat{\nu}^{k,a}_s - \bar{\nu}^{a}_s \Big\}  ds \Big)^2 +  \Big(\frac{1}{N}\sum_{k \in [N]} \int_0^t  \Big\{\hat{\nu}^{k,n}_s - \bar{\nu}^{n}_s \Big\}  ds \Big)^2 \Big] dt \\ 
    & \leq \frac{1}{2 \kappa_a^2} \lVert \varphi \rVert_{L^{\infty}([0,T])}   \int_0^r \mathbb{E} \Big[ \Big( \int_0^t  \Big\{\frac{1}{N}\sum_{k \in [N]} \hat{\nu}^{k,a}_s - \bar{\nu}^{a}_s \Big\}  ds \Big)^2 +  \Big( \int_0^t \Big\{\frac{1}{N}\sum_{k \in [N]} \hat{\nu}^{k,n}_s - \bar{\nu}^{n}_s \Big\}  ds \Big)^2 \Big] dt \\ 
       & \leq \frac{1}{\kappa_a^2} \lVert \varphi \rVert_{L^{\infty}([0,T])}   \int_0^r \mathbb{E} \Big[ \Big( \int_0^t  \Big\{\frac{1}{N}\sum_{k \in [N]} \hat{\nu}^{k,a}_s - \bar{\nu}^{a}_s \Big\}^2  ds \Big) +  \Big( \int_0^t \Big\{\frac{1}{N}\sum_{k \in [N]} \hat{\nu}^{k,n}_s - \bar{\nu}^{n}_s \Big\}^2  ds \Big) \Big] dt
\end{align*}
A similar inequality holds for $E(r) = \mathbb{E} \Big[\int_0^r (\frac{1}{N}\sum_{k \in [N]} \hat{\nu}^{k,n}_t - \bar{\nu}^n_t)^2 dt \Big]$ with $\kappa_a$ replaced by $\kappa_n$. By summing those two inequalities and by Gronwall's lemma, we conclude that $D(r) + E(r) = 0$. Combining with \eqref{eq:inequality_1}, we obtain that $\eqref{eq:difference_of_functionals} = O(\frac{1}{N})$

Then, it is straightforward to deduce that for all $\omega \in  \mathcal{A}^j$: 
\begin{align}
    J^j(\omega, \hat{\nu}^{-j}) \leq \underset{\boldsymbol{\nu} \in \mathcal{A}^j}{\sup} J^j(\nu, \hat{\nu}^{-j})
\end{align}
so for $\omega = \hat{\nu}^j$, we obtain the LHS. 

From the previous result:
\begin{align*}
    J^j(\nu, \hat{\nu}^{-j}) & \leq J^j(\nu, \overline{\nu}) + O(\frac{1}{N}) \\
    &  \leq \underset{\boldsymbol{\nu} \in \mathcal{A}^j}{\sup} J^j(\nu, \overline{\nu}) + O(\frac{1}{N}) \\
    & = J^j(\hat{\nu}^j, \overline{\nu}) + O(\frac{1}{N}) \text{ by definition of } \hat{\nu}^j
\end{align*}
By taking the supremum on $\nu$, we have : 
\begin{align}
   \underset{\boldsymbol{\nu} \in \mathcal{A}^j}{\sup} J^j(\nu, \hat{\nu}^{-j}) &  \leq J^j(\hat{\nu}^j, \overline{\nu}) + O(\frac{1}{N}) \leq J^j(\hat{\nu}^j, \hat{\nu}^{-j}) + O(\frac{1}{N})
\end{align}
by applying the intermediate result at the end. Hence the RHS. 

\end{appendix}

\nocite{*}
\printbibliography

\end{document}